\documentclass[conference,10pt]{llncs}
\usepackage{amsmath}
\usepackage{times}
\usepackage{algorithm2e}
\usepackage{algorithmicx}
\usepackage{graphicx}
\usepackage{qtree}

\newtheorem{myt}{Theorem}
\newtheorem{myp}{Proposition}
\newtheorem{myl}{Lemma}
\newtheorem{myc}{Corollary}
\newtheorem{myo}{Observation}
\newtheorem{myd}{Definition}

\pagestyle{plain} 
\begin{document}
\title{Total stretch minimization on single and identical parallel machines}

\author{ Abhinav Srivastav\inst{1,2}, Denis Trystram\inst{1.3}}
\institute{Univ. Grenoble Alpes \inst{1}, CNRS-Verimag\inst{2}  \& Institut Universitaire de France \inst{3}
\\ \email{abhinav.srivastav@imag.fr, trystram@imag.fr}
}

\maketitle

\begin{abstract}
We consider the classical problem of scheduling $n$ jobs with release dates on both single and identical parallel machines. 
We measure the quality of service provided to each job by its stretch, which is defined as the ratio of its response time to processing time. 
Our objective is to schedule these jobs non-preemptively so as to minimize total stretch.
So far, there have been very few results for total stretch minimization especially for the non-preemptive case. 
For the preemptive version, the Shortest remaining processing time (SRPT) algorithm is known to give $2$-competitive ratio for total stretch on single machine while it has $13$-competitive ratio on identical parallel machines. 
We study the problem with some additional assumptions and present the stronger competitive ratio.
We show that the Shortest processing time (SPT) algorithm is $(\Delta - \frac{1}{\Delta}+1$)-competitive for non-preemptive total stretch minimization on single machine and it is $(\Delta -\frac{1}{\Delta}+ \frac{3}{2} -\frac{1}{2m})$ on $m$ identical parallel machines, where $\Delta$ is the upper bound on the ratio between the maximum and the minimum processing time of the jobs. 
\end{abstract}
%
%
%
\section{Introduction}

\par We consider the problem of non-preemptive scheduling of jobs with release dates on single and identical parallel machines. 
Our objective is to schedule these jobs so as to guarantee the \textit{``fair"} quality of service to individual jobs. 
Stretch is defined as a factor by which a job is slowed down with respect to the time it takes on unloaded system~\cite{Bender98flowand}. 
Formally, we are given a set of $n$ jobs where the job $J_j$ has a processing time $p_j$ and a release date $r_j$ before which it cannot be scheduled, then the stretch $s_j$ of job $J_j$ is formally defined as $\frac{F_j}{p_j}$, where $F_j = C_j - r_j$ denotes the flow time ($C_j$ being the completion time of job $J_j$ in the schedule). 
Our objective is to schedule the stream of jobs arriving online so as to minimize $\sum s_j$ for all the instances.
This objective is often referred to as the \textit{average stretch} or \textit{total stretch} optimization problem.
In this paper, we restrict our attention to schedule the jobs non-preemptively on single and parallel machines.
%
%
In the classical scheduling notation introducted by Graham et al.~\cite{Graham:1979}, these problems are respectively represented as $1|r_j|\sum s_j$ and $Pm|r_j|\sum s_j$. 
Legrand et al.~\cite{Arnaud2008} showed using reduction from \emph{partition} problem, that $1|r_i|\sum s_i$ is NP complete. 
\subsection{Related works}
\par Muthukrishnan et al.~\cite{min-average-stretch} showed that the classical scheduling policy, Shortest Remaining processing time (SRPT) is $2$ and $13$-competitive for the problem of $1|r_j, pmtn| \sum s_j$ and $Pm|r_j, pmtn| \sum s_j$, respectively.
Later, Chekuri et al.~\cite{Chekuri:2001} presented an algorithm that achieves competitive ratios of $13$ and $19$ for migratory and non-migratory models of $Pm|r_j, pmtn| \sum s_j$.
Bender et al.~\cite{Bender03} presented a PTAS for uniprocessor preemptive case of total stretch with running time in $O(n^{poly(\frac{1}{\epsilon})})$. 
A more general problem than the total stretch is the problem of minimizing the sum of weighted flow time ($\sum w_i F_i$).
There is no online algorithm known with the constant competitive ratio for the sum of weighted flow time.
Bansal et al.~\cite{Nikhil:2007} showed using resource augmentation that there is an $O(1)$-speed $O(1)$-approximation for the offline version of weighted sum flow problem. 
Leonardi et al.~\cite{Leonardi:2007} proved the lower bound of $\Omega(n^{\frac{1}{2} -\epsilon})$ for $1|r_j|\sum F_j$, while Kellerer et al.~\cite{Kellerer95approximability} showed that the worst case has a lower bound of $\Omega(n^{\frac{1}{3}-\epsilon})$ for $Pm|r_j|\sum F_j$. 
Considering that such strong lower bounds exist for sum flow time, we assume additional information that the ratio of maximum processing time over minimum processing time for all the jobs is bounded by $\Delta$. 
Using this assumption Bunde~\cite{Bunde04sptis} proved that the Shortest Processing time (SPT) algorithm is $\frac{\Delta+1}{2}$-competitive for the sum flow time on a single machine. 
Chekuri et al.~\cite{Chekuri:2001} provided an online algorithm for $1|r_i,pmtn|\sum_i  w_i F_i$ that is $O(log^2\Delta)$-competitive. 
They also give a quasi-polynomial time $(2+\epsilon)$-approximation for the offline case when the weights and processing times are polynomial bounded. 
Tao et al.~\cite{Tao:2013} showed that Weighted shortest processing time is $\Delta+1$ and $\Delta + \frac{3}{2} - \frac{1}{2m}$-competitive for sum of weighted flow time on single and parallel machines, respectively.
Their analysis is based on the idea of instance transformation which inherently assumes that the weights are independent of job's parameters.  For the case of stretch minimization, this assumption is not valid.
We provide proof for this special case where weights are dependent on processing times i.e $w_i = \frac{1}{p_i}$. Moreover, the competitive ratios presented in the paper, are tighter in comparison to that of Tao et al.~\cite{Tao:2013}.
\subsection{Contributions}
\par In this paper, we extend the understanding of the competitiveness of stretch for non-preemptive schedules by presenting new competitive ratios. 
We show that SPT provides $(\Delta - \frac{1}{\Delta}+1)$ and $(\Delta -\frac{1}{\Delta} + \frac{3}{2} - \frac{1}{2m})$-competitiveness in $O(n\log n)$ time for problem of $1|r_j|\sum s_j$ and $Pm|r_j|\sum s_j$, where $m$ is the number of machines. 
%
%
Our analysis for single machine is based on careful observations on the structural similarity between SPT and SRPT schedules. 
On another hand, our analysis for parallel machine is based on converting SPT on parallel machines to a new schedule on a virtual single machine. 
\par This paper is organised as follows. In Section \ref{pre}, we present basic definitions and notations used in this paper. 
In Section \ref{1machine}, we analyze the SPT algorithm on a single machine while in section~\ref{m-machine}, we present the analysis of SPT on identical parallel machines. 
Section~\ref{conc} provides some concluding remarks for this work.
%
%
%
\section{Preliminaries} \label{pre}

\par In this section, we introduce some basic definitions and notations, that are used frequently in the remainder of this paper. 
We consider the following clairvoyant online scheduling scenario. 
A sequence of jobs arrive over time and the processing time of each job is known at its time of arrival. 
Our goal is to execute the continuously arriving stream of jobs. Let $\mathcal{I}$ be a given scheduling instance specified by a set of jobs $\mathit{J}$, and for each job $J_j\in\mathit{J}$, a release time $r_j$ and a processing time $p_j$. 
Without loss of generality, we assume that the smallest and largest processing times are equal to 1 and $\Delta$, respectively. 
%
%
%
%
\par The proposed work is focused on studying two different well-known algorithms, namely SRPT and SPT. 
The Shortest Remaining Processing Time (abbreviated as SRPT) is a preemptive schedule which can be defined as follows: 
at any time $t$, the available job $J_j$ with the shortest remaining processing time $\rho_j(t)$ is processed until it is either completed or until another job $J_i$ with $\rho_i< \rho_j(r_i)$ 
becomes available, where the remaining processing time $\rho_j(t)$ of job $J_j$ is the amount of processing time of $J_j$ which has not been scheduled before time $t$. 
In the second case, job $J_j$ is preempted and job $J_i$ is processed. 
On another hand, the Shortest Processing Time  (abbreviated as SPT) is a non-preemptive non-waiting schedule that runs the shortest available job in the queue whenever a processor becomes idle.
\par Formally, an online algorithm $\mathcal{A}_{on}$ is said to be $\alpha$-competitive with respect to a offline algorithm $\mathcal{A}_{off}$ if the worst case ratio (over all possible instances) of the performance of $\mathcal{A}_{on}$ is no more that $\alpha$ times the performance of $\mathcal{A}_{off}$. 
%
\section{Analysis of $1|r_i|\sum s_i$} \label{1machine}
\par We begin by introducing some notions of schedule that play a central role in our analysis. We first show in section \ref{srpt-spt}, the structural similarity between SRPT and SPT schedules. Then, we construct a non-preemptive schedule by changing SRPT into a new schedule (called POS) and show that POS is $(\Delta - \frac{1}{\Delta}+1)$-competitive for SRPT (section~\ref{inters}). 
Later, we show in section~\ref{spt-ana} that the total stretch of SPT and SRPT are no worse than that of POS and non-preemptive schedules, respectively. 
Thus, the cost of SPT is $(\Delta-\frac{1}{\Delta}+1)$ factor within the cost of an optimal non-preemptive schedule.
\subsection{Structure of SRPT and SPT} \label{srpt-spt}
\par Without the loss of generality (W.l.o.g), we assume that SRPT resumes one of the jobs with equal remaining processing time before executing a new job. 
Though it may choose arbitrarily between jobs with equal initial processing times, provided that SPT uses the same order. 
In SRPT we define an active interval $I_j = [S_j,C_j]$ for each job $J_j$, where $S_j$ is the start time of $J_j$ in the preemptive schedule. 
Note that due to the preemptive nature of schedule, the length of $I_j$ (denoted by $|I_j|$) is greater than or equal to the processing time $p_j$ of job $J_j$. 
When two such active intervals intersect, one is contained in the other and there is no machine idle time in between both intervals~\cite{Kellerer95approximability}.
\par Based on such strong containment relations, we define a \textit{directed ordered forest} as shown in Figure~\ref{srpt-dof}. The vertices consist of jobs $1,....,n$. 
There exists a directed edge going from job $J_i$ to $J_j$ if and only if $I_j \subseteq I_i$ and there does not exist a job $J_k$ with $I_j \subseteq I_k\subseteq I_i$. 
For every vertex $i$, its children are ordered from left to right according to the ordering of their corresponding intervals in $I_i$. 
Hence, we have a collection of directed out-trees $\mathcal{T} = \{T_1,....,T_r\}$. 
We also order the roots $\gamma(T_k)$ of trees  from left to right according to the ordering of their corresponding intervals. 
Hence, SRPT runs the jobs in order of out-trees that is: 
for every out-tree, all the jobs belonging to an out-tree $T_a$ are executed before $T_b$ if and only if $I_{\gamma(T_a)}<I_{\gamma(T_b)}$.  
Bunde showed in~\cite{Bunde04sptis} that SPT also runs the job in similar fashion. 
Thus, the difference between SRPT and SPT comes from the order of execution of jobs within each out-tree.
\begin{figure} 
\centering
\includegraphics[width=4in]{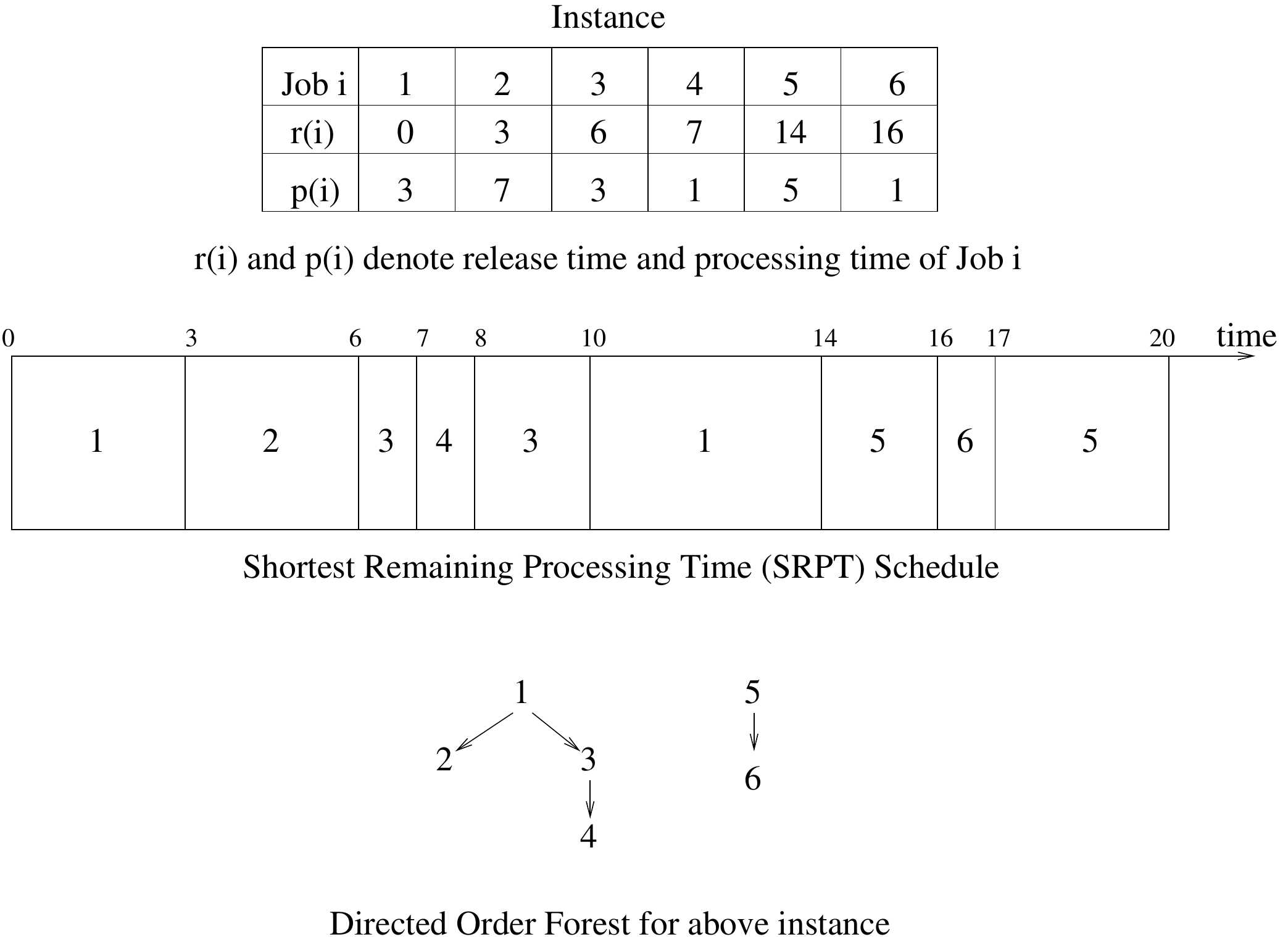}
\caption{Example showing a SRPT schedule and its corresponding directed ordered forest}
\label{srpt-dof}
\end{figure}
\subsection{Intermediate schedule} \label{inters}
\par Starting from SRPT schedule, we construct a new non-preemptive schedule called \textit{POS} (which stands for Post Order Schedule). 
During the interval $I_j = [S_{j},C_{j}]$, where $j = \gamma(T_a)$, POS runs the jobs of $T_a$, starting with $J_j$ and then running the other jobs of $T_a$ 
in order of increasing SRPT completion time (post order transversal) as shown in Figure~\ref{srpt-2-pos}.
\begin{figure}
\centering
\includegraphics[width=3.5in]{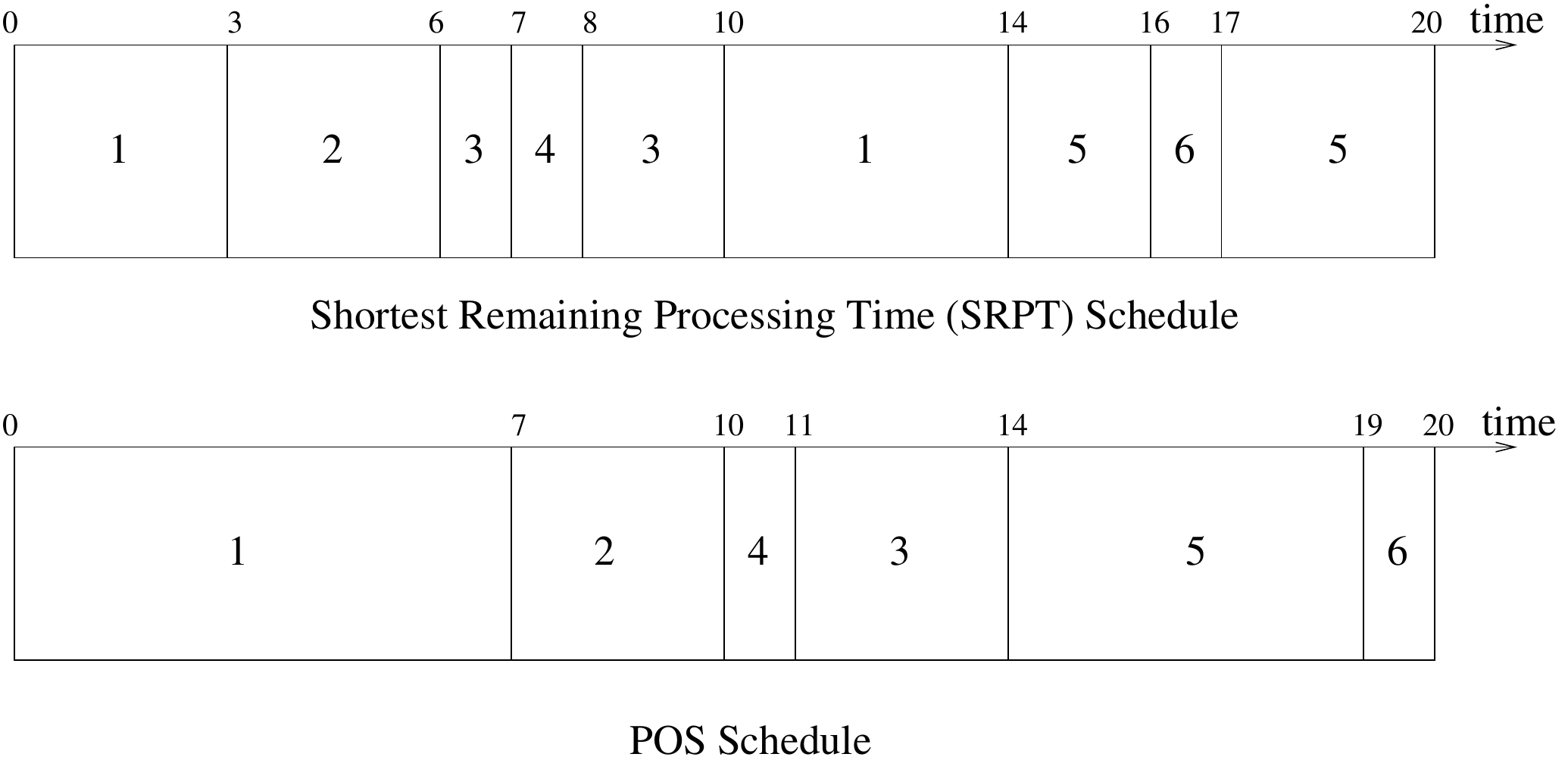}
\caption{Transformation showing SRPT to POS schedule}
\label{srpt-2-pos}
\end{figure}
\begin{myd} \label{def-compact}
A schedule is said compact if there is no idle time between the execution of jobs except due to the unavailability of jobs.
\end{myd}
%
%
%
%
\begin{myp} \label{pos-compact}
POS is compact if and only if SRPT is compact
\end{myp}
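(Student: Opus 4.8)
The plan is to prove both implications at once by showing that the transformation from SRPT to POS leaves the machine's busy/idle pattern unchanged, so that neither direction can create or remove an avoidable idle period. First I would establish that in SRPT each out-tree $T_a$ is executed exactly within its root interval $I_{\gamma(T_a)}=[S_{\gamma(T_a)},C_{\gamma(T_a)}]$: by the containment property of active intervals, any job run inside this window has its active interval intersecting $I_{\gamma(T_a)}$, hence nested inside it, hence it is a descendant of $\gamma(T_a)$, i.e. a job of $T_a$. Since the root is released by $S_{\gamma(T_a)}$ and completes only at $C_{\gamma(T_a)}$, it is continuously available throughout the window, so SRPT is never idle there; consequently $|I_{\gamma(T_a)}|=\sum_{k\in T_a}p_k$ and every idle period of SRPT falls in a gap $(C_{\gamma(T_a)},S_{\gamma(T_{a+1})})$ between consecutive root intervals.

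Next I would show that POS occupies precisely the same windows. By construction POS starts $T_a$ at $S_{\gamma(T_a)}$ and runs exactly its jobs, whose total length is $\sum_{k\in T_a}p_k=|I_{\gamma(T_a)}|$; provided POS never waits for a release inside the window, it fills $[S_{\gamma(T_a)},C_{\gamma(T_a)}]$ with no internal idle and completes $T_a$ exactly at $C_{\gamma(T_a)}$. Because both schedules process the trees in the same left-to-right order, at each instant $C_{\gamma(T_a)}$ the two schedules have completed the same job set (that of $T_1,\dots,T_a$) and share the same set of released, uncompleted jobs. Hence the only candidate idle intervals of POS are the same gaps $(C_{\gamma(T_a)},S_{\gamma(T_{a+1})})$ as for SRPT, and in each such gap a job is available to POS if and only if one is available to SRPT. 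Thus an avoidable idle period (idle with some job available) exists in POS iff it exists in SRPT, which gives the equivalence.

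The hard part will be discharging the proviso used above: that reordering the jobs inside a tree never forces POS to idle while some job of $T_a$ is already released, which would make POS non-compact even when SRPT is. Concretely I must certify release-time feasibility of the post-order, namely that when POS begins a job $J_i\in T_a$ at time $\tau_i=S_{\gamma(T_a)}+\sum_{k\prec i}p_k$ (the sum running over the jobs POS places before $i$) one has $r_i\le\tau_i$. The structural lever is that POS runs the root first whereas SRPT finishes the root last in $T_a$: front-loading the full processing time $p_{\gamma(T_a)}$ supplies a reserve that, combined with the nesting of active intervals and the defining rule of SRPT (the running job always has the smallest remaining time, so the last-completing root carries the largest remaining processing), absorbs the reordering and keeps POS ahead of every release. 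I would make this precise by induction on the post-order position. This release-feasibility is the only nontrivial point; granting it, POS fills every root interval without internal idle, the idle intervals of the two schedules coincide, and compactness transfers in both directions.
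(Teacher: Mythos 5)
Your overall architecture is the same as the paper's: SRPT can idle only in the gaps between consecutive root intervals (since the root of $T_a$ is available throughout $I_{\gamma(T_a)}$), so the whole proposition reduces to showing that POS processes all of $T_a$ within $I_{\gamma(T_a)}$ with no internal idle time. The problem is that you explicitly leave this reduction's one nontrivial ingredient --- release feasibility of the post-order, i.e.\ $r_i\le\tau_i$ for every non-root job of $T_a$ --- as an undischarged ``proviso,'' supported only by the heuristic that front-loading $p_{\gamma(T_a)}$ ``supplies a reserve'' and a promise of an induction. Since this feasibility claim \emph{is} the content of the proposition (everything else is bookkeeping), that is a genuine gap. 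For comparison, the paper closes it in one line by invoking the optimality of SRPT with respect to the number of completed jobs: if POS has been busy on $[S_{\gamma(T_a)},t]$ and wants to start its $q$-th non-root job at $t<C_{\gamma(T_a)}$, it has already completed $q$ jobs (the root plus $q-1$ others), so SRPT has completed at least $q$ jobs by $t$; none of these is the root, and POS takes the non-root jobs in SRPT-completion order, so the job POS wants to start has already been \emph{completed} by SRPT by time $t$ and is in particular released.

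Your ``reserve'' route can be made rigorous, but not in the naive form your sketch suggests, and this is exactly where an attempted write-up would stall. At the instant a job $J_i\in T_a$ is released, SRPT may hold partial work not only on the root but on an entire nested chain of uncompleted ancestors $A_1=\gamma(T_a)\supset A_2\supset\cdots\supset A_s$ of $J_i$; the busy-time accounting then requires $\sum_{i}w_{A_i}\le p_{\gamma(T_a)}$, not merely $w_{A_1}\le p_{\gamma(T_a)}$. This stronger bound does hold --- when $A_{i+1}$ starts, the SRPT rule forces the frozen partial work on $A_i$ to be less than $p_{A_i}-p_{A_{i+1}}$, and the sum telescopes to $p_{A_1}$ --- but that telescoping argument (or the paper's job-count argument) is precisely the missing step, and without one of them the proof is not complete.
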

\begin{proof}
There can be idle time in SRPT schedule only when there is no job available for the execution. 
During the execution of jobs belonging to the out-tree $T_a$ in SRPT, at any time $t \in  [S_{T_{\gamma(T_a)}},C_{T_{\gamma(T_a)}}]$, there is at least one partial uncompleted job (specifically, $\gamma(T_a)$) available for the execution. 
Thus, idle time can only exist between intervals of out-tree $T_a$ and $T_{a+1}$. 
It is sufficient to show that POS completes all the jobs belonging to $T_a$ in interval $I_{\gamma(T_a)}$.
\par At any time $t\geq S_{\gamma(T_a)}+p_{\gamma(T_a)}$, the job scheduled in POS, has already been completed by SRPT. 
This follows from the fact that SRPT has always finished at least as many jobs as in any other algorithm~\cite{Chung:2010}. 
While at any time $t < S_{\gamma(T_a)} + p_{\gamma(T_a)}$, POS runs the root job of $T_a$ i.e $\gamma(T_a)$.
Hence, POS is busy for entirety of the interval $I_{\gamma(T_a)}$ completing all the jobs belonging to $T_a$.
\end{proof}
The following corollary is a direct consequence of proposition~\ref{pos-compact}.

\begin{myc} \label{pos-srpt-order}
POS and SRPT only differ in the order of execution of the jobs within each out-tree.
\end{myc}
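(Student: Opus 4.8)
The plan is to show that SRPT and POS partition the timeline in exactly the same way into the per-tree intervals $I_{\gamma(T_a)}$, and that within each such interval both schedules process precisely the same set of jobs, namely those of $T_a$. Once this is established, the schedules can differ only in the order in which the jobs of a given out-tree are run inside its interval, which is exactly the assertion of the corollary.

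First I would verify that in SRPT the interval $I_{\gamma(T_a)} = [S_{\gamma(T_a)}, C_{\gamma(T_a)}]$ is occupied by exactly the jobs of $T_a$. On the one hand, every $J_j \in T_a$ is a descendant of the root $\gamma(T_a)$, so $I_j \subseteq I_{\gamma(T_a)}$ by the definition of the directed ordered forest, and hence $J_j$ both starts and completes within $I_{\gamma(T_a)}$. On the other hand, if some job $J_k$ is run at an instant $t \in I_{\gamma(T_a)}$, then $t \in I_k$, so the active intervals $I_k$ and $I_{\gamma(T_a)}$ intersect; by the containment property recalled in Section~\ref{srpt-spt} one interval contains the other, and since $\gamma(T_a)$ is a root its interval is maximal, forcing $I_k \subseteq I_{\gamma(T_a)}$, i.e. $J_k \in T_a$. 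Thus SRPT devotes precisely the interval $I_{\gamma(T_a)}$ to precisely the jobs of $T_a$. This first step is the main obstacle: it is the only part that genuinely uses the nesting and root-ordering structure of the forest, and everything else reduces to it.

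Next I would observe that POS enjoys the same property by construction together with Proposition~\ref{pos-compact}. By definition POS runs, during $I_{\gamma(T_a)}$, only jobs of $T_a$ (beginning with the root and then following the post-order traversal), and the proof of Proposition~\ref{pos-compact} guarantees that POS stays busy for the entirety of $I_{\gamma(T_a)}$ and completes all jobs of $T_a$ within it. Hence POS also devotes exactly $I_{\gamma(T_a)}$ to exactly the jobs of $T_a$. Combining the two steps, both SRPT and POS use the identical family of intervals $\{I_{\gamma(T_a)}\}_a$ and, inside each interval, execute the identical job set $T_a$; the only remaining degree of freedom is the intra-tree ordering, which proves the corollary.
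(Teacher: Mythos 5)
Your argument is correct and follows essentially the same route as the paper, which simply declares the corollary a direct consequence of Proposition~\ref{pos-compact}: you make explicit that the nesting of active intervals confines SRPT's work during $I_{\gamma(T_a)}$ to the jobs of $T_a$, and that Proposition~\ref{pos-compact} gives the matching property for POS. Your write-up just supplies the details the paper leaves implicit.
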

\begin{myl} \label{pos-comp}
POS is $(\Delta-\frac{1}{\Delta}+1)$-competitive for total stretch with respect to SRPT schedule.
\end{myl}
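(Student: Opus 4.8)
The plan is to bound the total stretch of POS against SRPT on a per-out-tree basis. By Corollary~\ref{pos-srpt-order}, POS and SRPT schedule exactly the same set of jobs within each active interval $I_{\gamma(T_a)}$, differing only in the internal order of execution. Since the total stretch decomposes as a sum over jobs, and the jobs partition into the out-trees $T_1,\dots,T_r$, it suffices to fix a single out-tree $T_a$ and show that the contribution of its jobs to the POS stretch is at most $(\Delta-\frac{1}{\Delta}+1)$ times their contribution to the SRPT stretch. Because release dates only constrain when an interval begins (and both schedules start each out-tree at the same time $S_{\gamma(T_a)}$), the release dates drop out of the within-tree comparison, reducing the problem to comparing two orderings of the same jobs over a common time window.

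**Controlling the per-job stretch ratio.**

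First I would examine a single job $J_j$ in $T_a$. In SRPT its flow time is $F_j^{\mathrm{SRPT}} = C_j - r_j \geq p_j$, so its SRPT stretch satisfies $s_j^{\mathrm{SRPT}} = F_j^{\mathrm{SRPT}}/p_j \geq 1$. In POS the same job runs non-preemptively but may finish later, so I would bound its POS completion time in terms of the SRPT completion times of jobs in $T_a$. Recall that POS runs jobs in post-order (increasing SRPT completion time), while SRPT, being preemptive, can interleave them; the key quantitative fact is that for each job the extra delay POS incurs is governed by the processing times of the other jobs in the tree, and because all processing times lie in $[1,\Delta]$, the ratio $F_j^{\mathrm{POS}}/F_j^{\mathrm{SRPT}}$ is controlled by the spread $\Delta$. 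The factor $\Delta-\frac{1}{\Delta}+1$ should emerge from summing these delays: the worst case pits a long job of length $\Delta$ (small stretch sensitivity) against short jobs of length $1$ (large stretch sensitivity), and the $-\frac{1}{\Delta}$ correction reflects that a job cannot delay itself.

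**Assembling the bound and the main obstacle.**

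With the per-tree inequality in hand, I would sum over all out-trees to conclude
\begin{equation}
\sum_{j} s_j^{\mathrm{POS}} \;\leq\; \Bigl(\Delta-\tfrac{1}{\Delta}+1\Bigr)\sum_{j} s_j^{\mathrm{SRPT}},
\end{equation}
which is exactly the claimed competitiveness of POS relative to SRPT. The main obstacle is the within-tree delay analysis: I must argue carefully that reordering from SRPT's interleaved execution to POS's post-order sequence never inflates any job's flow time by more than the $\Delta$-dependent factor, and that the worst-case configuration — where short, stretch-sensitive jobs wait behind long ones — realizes the bound. This requires leveraging the strong containment structure of the intervals (that nested intervals are idle-free and that POS keeps the machine busy throughout $I_{\gamma(T_a)}$, as established in Proposition~\ref{pos-compact}) to pin down exactly which jobs precede $J_j$ in each schedule. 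I expect the ratio $\frac{F_j^{\mathrm{POS}}}{F_j^{\mathrm{SRPT}}}$ to be delicate precisely for the shortest jobs, where both numerator and denominator are small and the $\Delta$ gap is most amplified.
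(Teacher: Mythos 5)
Your reduction to a single out-tree is exactly the paper's first step, but the core of the argument --- where the factor $\Delta - \frac{1}{\Delta} + 1$ actually comes from --- is left as an acknowledged obstacle, and the route you sketch for closing it would not work. You propose to control the per-job ratio $F_j^{\mathrm{POS}}/F_j^{\mathrm{SRPT}}$ (equivalently $s_j^{\mathrm{POS}}/s_j^{\mathrm{SRPT}}$) by a $\Delta$-dependent factor. That bound is false for individual jobs: a unit-length job that SRPT finishes immediately upon release (stretch $1$) may in POS wait behind the non-preempted root of length up to $\Delta$, giving it POS stretch close to $1+\Delta$, which already exceeds $\Delta - \frac{1}{\Delta} + 1$. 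The lemma cannot be proved job by job; the argument must be amortized over the whole tree.

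The paper's amortization has two ingredients you are missing. First, the root $\gamma(T_a)$ is the unique job whose stretch \emph{decreases} in POS: it is no longer preempted, so it completes $\sum_{k \neq \gamma(T_a)} p_k$ earlier, contributing a credit of $\sum_{k\neq\gamma(T_a)} \frac{p_k}{p_{\gamma(T_a)}}$ to the total. Each other job is delayed by at most $p_{\gamma(T_a)}$, costing at most $\frac{p_{\gamma(T_a)}}{p_k}$. Pairing cost and credit per job gives $\frac{p_{\gamma(T_a)}}{p_k} - \frac{p_k}{p_{\gamma(T_a)}} \leq \Delta - \frac{1}{\Delta}$, using $1 \le p_{\gamma(T_a)}/p_k \le \Delta$ (a consequence of the containment structure); this pairing is precisely where the $-\frac{1}{\Delta}$ comes from, not from ``a job cannot delay itself.'' Second, the resulting bound is \emph{additive}, $\sum^{pos} \le \sum^{srpt} + ||I_a||\left(\Delta - \frac{1}{\Delta}\right)$ where $||I_a||$ is the number of jobs in the tree; it becomes the multiplicative statement only via $\sum^{srpt} \ge ||I_a||$, i.e.\ every stretch is at least $1$. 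You note $s_j^{\mathrm{SRPT}}\ge 1$ early on but never use it for this conversion. Without the root's credit and the additive-to-multiplicative step, the proof does not close.
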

\begin{proof}
From corollary~\ref{pos-srpt-order}, it is sufficient to show that POS schedule is $(\Delta-\frac{1}{\Delta}+1)$-competitive with respect to SRPT schedule for any out-tree. 
Let $\sum\limits^{srpt}$ and $\sum\limits^{pos}$ denote the sum stretch of SRPT and POS for an out-tree $T_a$, respectively. 
The completion time of $\gamma(T_a)$ in POS, is $\sum\limits_{\forall J_k \in T_a}^{J_k \neq \gamma(T_a)} p_k$ time units earlier than that of SRPT. 
Consequently, $s_{\gamma(T_a)}^{pos} = s_{\gamma(T_a)}^{srpt} - \sum\limits_{\forall J_k \in T_a}^{J_k \neq \gamma(T_a)} \frac{p_k}{p_{\gamma(T_a)}}$.  
On another hand, the rest of the jobs are delayed in POS by at most $p_{\gamma(T_a)}$. 
Therefore, $s_k^{pos}\leq s_k^{srpt} + \frac{p_{\gamma(T_a)}}{p_k}, \forall J_k \in T_a, J_k \neq \gamma(T_a) $. Then, the sum stretch of all jobs in $T_a$ is given by :
\begin{align*}
\sum_{k\in T_a} s_k^{pos} &\leq \sum_{k\in T_a} s_k^{srpt} + \sum_{\forall J_k \in T_a}^{J_k \neq \gamma(T_a)}\frac{p_{\gamma(T_a)}}{p_k} -  \sum\limits_{\forall J_k \in T_a}^{J_k \neq \gamma(T_a)} \frac{p_k}{p_{\gamma(T_a)}}  \\
\end{align*}
\begin{align*}
\sum\limits^{pos} &\leq \sum\limits^{srpt} +  \sum_{\forall J_k \in T_a}^{J_k \neq \gamma(T_a)}(\frac{p_{\gamma(T_a)}}{p_k} - \frac{p_k}{p_{\gamma(T_a)}}) \\
&\leq \sum\limits^{srpt} + \sum_{\forall J_k \in T_a}^{J_k \neq \gamma(T_a)} (\Delta - \frac{1}{\Delta}) \\
&\leq \sum\limits^{srpt} + ||I_a|| (\Delta - \frac{1}{\Delta}) \leq (\Delta - \frac{1}{\Delta} +1)\sum\limits^{srpt} 
\end{align*}
where $||I_a||$ denotes the number of jobs executed in interval $I_a$. The third inequality is direct consequence of fact that $1 \leq \frac{p_{\gamma(T_a)}}{p_k} \leq \Delta$. The last inequality follows from the fact that $s_j \geq 1$, $\forall j \in J$.
\end{proof}
\subsection{SPT versus POS} \label{spt-ana}
\par In this section, we show that SPT achieves lower total stretch than that of POS. Our proof is based on iteratively changing SPT schedule to POS by removing the first difference between them. 
\par W.l.o.g, assume that SPT runs the jobs in numerical order: $J_1$ followed by $J_2$ and so on. Let the first difference between SPT and POS occurs when SPT starts a jobs $J_i$ while POS starts another job $J_j$ as shown in Figure~\ref{spt2pos}. 
SPT is changed by moving $J_j$ before $J_i$ and shifting every job from $i$ to $j-1$. Hence, the increase in the stretch (denoted by $\delta_{j}$) by the above transformation, is given by
\begin{align} \label{eq1}
\delta_{j} &= \sum\limits_{k=i}^{j-1}\frac{p_j}{p_k} - \sum\limits_{k=i}^{j-1}\frac{p_k}{p_j} =  \sum\limits_{k=i}^{j-1}(\frac{p_j}{p_k} - \frac{p_k}{p_j}) = \sum\limits_{k=i}^{j-1}\delta_{jk} \nonumber
\end{align} 
where $\delta_{jk}$ is the local increase in stretch by swapping job $J_k$ $(i \leq k \leq j-1)$ with $J_j$.
\begin{figure}
\includegraphics[scale=0.5]{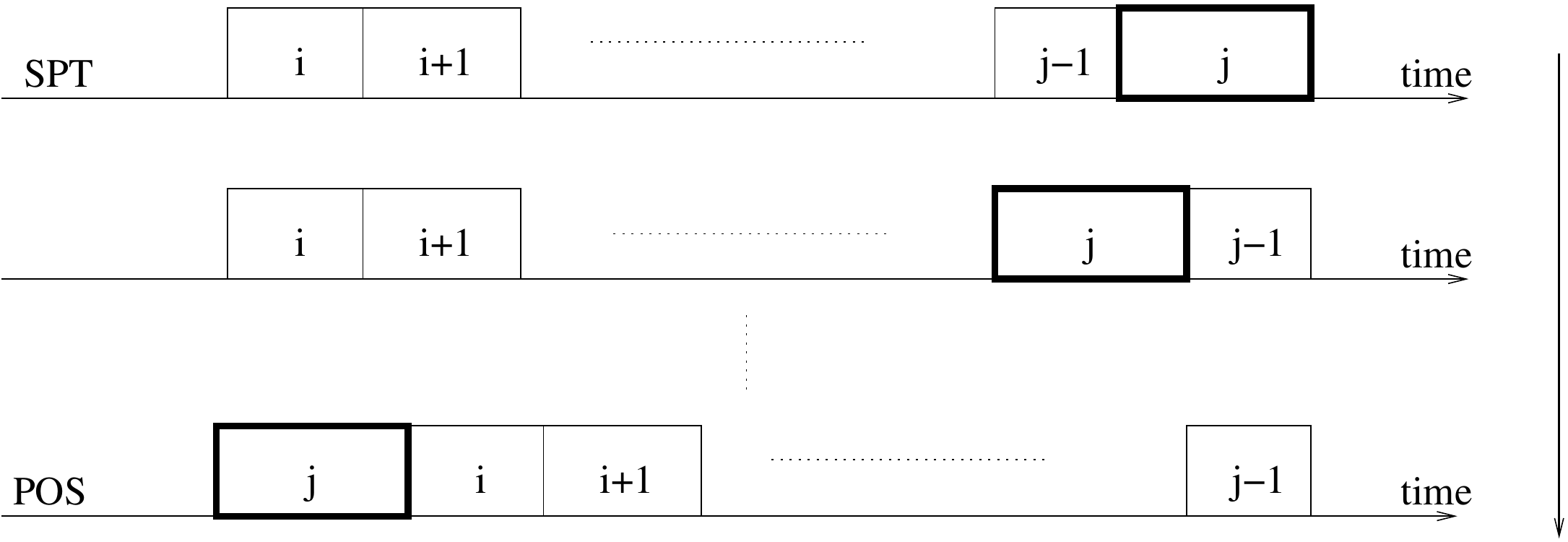}
\caption{Successive transformations from SPT to POS}
\label{spt2pos}
\end{figure}
%
%
Based on sizes of $J_k$ and $J_j$, we classify the transformation moves into two sets according to the sign of $\delta_{jk}$, namely  
$\delta_{jk} <0$ if $p_j < p_k$ and $\delta_{jk} \geq 0$ otherwise. Now, we present a series of technical results for proving that $\sum \delta_{j} \geq 0$, $\forall j \in J$. \\
\par Let consider two jobs $J_j$ and $J_k$ such that $\delta_{jk} < 0$; $J_j \prec J_k$ in POS and $J_k \prec J_j$ in SPT schedule, where $a \prec b$ denotes that job $a$ is executed before job $b$ in the schedule.
 
%
%
\begin{myl}  \label{l>j}
There exists a job $J_l$ ($l\neq j$) such that $J_l \prec J_k$ in POS while $J_k \prec J_l$ in SPT and $p_{l} \geq p_k >p_j$.
\end{myl}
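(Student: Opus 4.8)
The plan is to read the two order reversals through SPT's greedy rule and the jobs' release dates, and to locate $J_l$ among the jobs that SPT keeps waiting at the moment it commits to $J_k$. First I would fix a release‑date inequality. Let $t_k$ denote the time at which SPT starts $J_k$. Since $J_k \prec J_j$ in SPT although $p_j < p_k$, and SPT always runs the shortest available job, $J_j$ cannot have been available at $t_k$ (otherwise SPT would have preferred the shorter job $J_j$); hence $J_j$ is released strictly after $t_k$. On the POS side, $J_j \prec J_k$ forces POS to run $J_j$, and therefore also $J_k$, only after $J_j$'s release, so the start time of $J_k$ in POS is strictly larger than $t_k$.

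Next I would convert this timing gap into the existence of an overtaking job. By Corollary~\ref{pos-srpt-order} the two schedules differ only inside the out‑tree $T_a$ containing $J_k$, and by Proposition~\ref{pos-compact} both are compact on the common interval $I_{\gamma(T_a)}$; thus they process exactly the same jobs over the same span with no idle time. Because $J_k$ starts later in POS than in SPT, the total processing time of the jobs preceding $J_k$ is strictly larger in POS than in SPT; in particular the set of predecessors of $J_k$ in POS is not contained in its set of predecessors in SPT. Consequently there is at least one job $J_l$ with $J_l \prec J_k$ in POS but $J_k \prec J_l$ in SPT, and after discarding $J_j$ itself we obtain the two ordering requirements with $l \neq j$.

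The delicate point, and the step I expect to be the main obstacle, is to secure the size condition $p_l \ge p_k$. Note that $J_j$ is itself an overtaking job with $p_j < p_k$, so the mere existence of an overtaker is not enough: I must produce a \emph{large} one. The lever is again SPT's greedy rule, since every job released by $t_k$ but not yet completed when SPT starts $J_k$ has processing time at least $p_k$. I would therefore aim to show that among the overtaking jobs at least one is of this kind, namely released by $t_k$ and completed by SRPT before $J_k$, so that by the POS rule of increasing SRPT completion time it precedes $J_k$ in POS. Concretely, I would assume for contradiction that every overtaker has size strictly below $p_k$; each such job must then be released after $t_k$ exactly as $J_j$ was, and feeding this back into the work balance on $I_{\gamma(T_a)}$, together with the fact (Proposition~\ref{pos-compact}) that POS never idles while jobs of $T_a$ are pending, tightly constrains which jobs can precede $J_k$ in POS. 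Turning that constraint, via the SRPT completion‑time order governing POS, into the presence of a job of size at least $p_k$ ahead of $J_k$ is where the real work lies; handling the special status of the root (always first in POS) and the tie‑breaking among jobs of equal processing time are the remaining technical wrinkles.
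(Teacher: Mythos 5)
Your first two steps coincide with the paper's: SPT's greedy rule forces $r_j > t_k$ (where $t_k$ is SPT's start time for $J_k$), and since $J_j \prec J_k$ in POS, POS cannot start $J_k$ before $r_j > t_k$. Your work-balance argument for the existence of \emph{some} overtaking job ($J_l \prec J_k$ in POS but $J_k \prec J_l$ in SPT) is a mild variant of the paper's, which instead exhibits the job that POS is running at time $t_k$ (such a job exists by the compactness of Proposition~\ref{pos-compact} and Corollary~\ref{pos-srpt-order}).

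The genuine gap is the size condition $p_l \geq p_k$, and you flag it yourself: you sketch a contradiction strategy (assume every overtaker is strictly smaller than $p_k$) and then state that extracting a large overtaker from the resulting constraints ``is where the real work lies.'' That step is the entire content of the lemma. Without $p_l \geq p_k$ the coupling in Corollary~\ref{sum>0} collapses, since $\delta_{lj}+\delta_{jk} = \frac{(p_kp_l+p_j^2)(p_l-p_k)}{p_jp_kp_l}$ is nonnegative precisely when $p_l \geq p_k$; an overtaker of size below $p_k$ (such as $J_j$ itself) is useless. The paper closes this step by a localized iteration: take $J_l$ to be the job POS runs at time $t_k$; if $p_l \geq p_k$ it is done, and if $p_l < p_k$ then, since POS is running $J_l$ at $t_k$ we have $r_l \leq t_k$, so SPT's greedy choice of $J_k$ at $t_k$ forces SPT to have already dispatched $J_l$ earlier, and one repeats the argument at that earlier time; the process moves strictly backward in time and can only terminate at a job of size at least $p_k$. (The paper's own write-up of this iteration is admittedly terse, but it at least names the mechanism.) Your proposal never reaches an argument of this kind, so the claim $p_l \geq p_k$ remains unestablished and the proof is incomplete.
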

\begin{proof}
Let $t'$ be the time at which SPT schedules job $J_k$. Since $p_j < p_k$ and $J_k \prec J_j$ in POS, it follows that $t' < r_j$. Let $t \geq r_j >t'$  be the time at which POS start executing $J_k$. Therefore at time $t'$, POS schedules some job $J_l$ while SPT schedules $J_k$. Now if $p_l \geq p_k$, our lemma holds. Otherwise if $p_l < p_k$, then using the argument iteratively, it can be shown that there is some other job $J_{l'}$ such that $J_k \prec J_{l'}$ in SPT while $J_{l'} \prec J_k$ in POS. 
%
%
%
%
%
%
\end{proof}
\begin{myo} \label{order-pos}
$J_l \prec J_j \prec J_k$ in POS schedule.
\end{myo}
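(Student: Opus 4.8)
The statement couples three POS-orderings, two of which are already in hand: $J_j \prec J_k$ in POS is the standing hypothesis on the pair $(J_j,J_k)$, and $J_l \prec J_k$ in POS is part of the conclusion of Lemma~\ref{l>j}. The plan is therefore to reduce the observation to the single missing relation $J_l \prec J_j$ in POS, and then simply chain $J_l \prec J_j \prec J_k$.

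To obtain $J_l \prec J_j$, I would argue through start times relative to the release date $r_j$, reusing the instant $t'$ introduced in the proof of Lemma~\ref{l>j} at which SPT begins $J_k$. First I would record that $t' < r_j$: since $p_j < p_k$ and SPT always runs the shortest available job, had $J_j$ been released by time $t'$ it would have been preferred over the longer $J_k$; but $J_k \prec J_j$ in SPT shows $J_j$ has not yet been run, so $J_j$ must be unreleased at $t'$, i.e. $r_j > t'$. Next, $J_l$ is precisely the job occupying the POS machine at time $t'$, so its POS start time is at most $t'$. Since POS respects release dates, $J_j$ cannot start before $r_j$, and therefore $J_l$ starts strictly earlier than $J_j$ in POS, giving $J_l \prec J_j$. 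Combining this with the hypothesis $J_j \prec J_k$ yields $J_l \prec J_j \prec J_k$.

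The delicate point, and the step I expect to require most care, is that the clean identification of $J_l$ as ``the job POS runs at $t'$'' covers only the base case of Lemma~\ref{l>j}; when that job is short ($p_l < p_k$) the lemma recurses to a different witness. To close the argument I would verify that every recursive replacement keeps the witness strictly to the left of $J_j$ in POS. Concretely, I would show that any job POS executes before it executes $J_k$ but which SPT runs after $J_k$ must have been released before $r_j$, so that its POS start also falls before $r_j$; this keeps each successive candidate $J_l$ ahead of $J_j$ and makes the final witness satisfy $J_l \prec J_j$ as required.
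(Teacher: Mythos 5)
Your argument is correct and is essentially the justification the paper leaves implicit: the paper states this as an observation with no proof, since $J_j \prec J_k$ is the standing hypothesis, $J_l \prec J_k$ comes from Lemma~\ref{l>j}, and $J_l \prec J_j$ follows exactly as you say from $J_l$ starting in POS at or before $t' < r_j$ while $J_j$ cannot start before $r_j$. Your closing remark about tracking the witness through the iterative case of Lemma~\ref{l>j} is a legitimate point of care that the paper glosses over, and your proposed fix (each successive candidate also starts before $r_j$ in POS) is the right one.
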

\begin{myl} \label{J_j,J_l}
$J_j \prec J_l$ in SPT. 
\end{myl}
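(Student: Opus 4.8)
The plan is to argue by contradiction: I assume $J_l \prec J_j$ in SPT and derive a contradiction by combining the greedy nature of SPT with the structural constraints that POS inherits from the SRPT out-trees. The first task is to locate the three jobs relative to the out-tree decomposition. By Corollary~\ref{pos-srpt-order}, SPT and POS execute the out-trees in the \emph{same} order and can disagree only on the order of jobs \emph{within} a single out-tree. Here the three jobs appear as $J_l \prec J_j \prec J_k$ in POS (Observation~\ref{order-pos}) whereas $J_k$ precedes both $J_j$ and $J_l$ in SPT (the standing hypothesis of this subsection gives $J_k \prec J_j$, and Lemma~\ref{l>j} gives $J_k \prec J_l$). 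Since this relative order is reversed, $J_j,J_k,J_l$ cannot lie in distinct out-trees; hence they all belong to a single out-tree $T_a$, which (by the argument in the proof of Proposition~\ref{pos-compact}) is run inside one contiguous idle-free interval $I_{\gamma(T_a)}$ that, because both SPT and POS are work-conserving, is common to the two schedules.

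Next comes the greedy step. Let $t_l$ be the time at which SPT starts $J_l$. Recall from the proof of Lemma~\ref{l>j} that $p_j < p_k \le p_l$, that $J_l$ is available at time $t'$ (since POS runs it then, so $r_l \le t'$), and that $r_j > t'$, where $t'$ is the SPT start time of $J_k$; moreover $J_k \prec J_l$ in SPT gives $t_l \ge t' + p_k$. If $J_j$ were released and still unscheduled at $t_l$, then since $p_j < p_l$ and SPT always runs the shortest available job, SPT would select $J_j$ rather than $J_l$, contradicting the assumption $J_l \prec J_j$. Therefore, under the contradiction hypothesis, $J_j$ must be unavailable at $t_l$, i.e.\ $r_j > t_l$. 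Consequently it suffices to rule out $r_j > t_l$, i.e.\ to establish $t_l \ge r_j$: SPT cannot begin $J_l$ before $J_j$ is released. Once $t_l \ge r_j$ is known, $J_j$ is available at $t_l$, so it must already be completed (otherwise SPT picks the shorter $J_j$), which is exactly $J_j \prec J_l$.

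The remaining inequality $t_l \ge r_j$ is the crux, and I expect it to be the main obstacle. The plan is a volume-accounting argument inside the common busy interval $I_{\gamma(T_a)}$. On the SPT side, every job executed in $[t', t_l)$ is released before $r_j$ (it starts before $t_l < r_j$) and has processing time at most $p_l$ (each was preferred over the available $J_l$), so these jobs fill exactly the length $t_l - t'$. On the POS side, $J_l$ is already running at time $t'$, while by Observation~\ref{order-pos} the job $J_k$ is postponed until after $J_j$. I would then compare how much work each schedule can devote to the jobs released strictly before $r_j$, and show that if SPT had started $J_l$ before $r_j$, then POS would be forced to run $J_k$ (which it defers past $J_j$) only by either idling or by violating the post-order rule inside $T_a$, contradicting Proposition~\ref{pos-compact}. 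The genuine difficulty is translating the POS ordering rule (root first, then by increasing SRPT completion time) together with the nesting of active intervals into a quantitative statement about release times; it is here that the SRPT-specific out-tree structure, rather than a purely greedy exchange, becomes indispensable.
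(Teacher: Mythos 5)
There is a genuine gap. Your setup coincides with the paper's: assume $J_l \prec J_j$ in SPT, let $t_l$ be SPT's start time for $J_l$, and invoke the SPT greedy rule to conclude that either $p_l \le p_j$ (impossible, since Lemma~\ref{l>j} gives $p_l \ge p_k > p_j$) or $J_j$ is not yet released at $t_l$, i.e.\ $r_j > t_l$. At that point the entire content of the lemma is the refutation of $r_j > t_l$, and this is precisely the step you do not carry out. You state explicitly that $t_l \ge r_j$ is ``the crux'' and ``the main obstacle,'' and what follows is a conditional plan (``I would then compare\dots,'' ``show that POS would be forced to run $J_k$ only by either idling or by violating the post-order rule'') rather than an argument: the claimed dichotomy for POS is never established, and you yourself flag that translating the post-order rule and the nesting of SRPT active intervals into a statement about $r_j$ is the unresolved difficulty. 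A reduction of the lemma to an unproven inequality, however correctly derived, is not a proof of the lemma.

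For comparison, the paper closes exactly this step differently and much more directly: it works with the two start times $t' < t$ of $J_l$ in POS and in SPT (where, from the proof of Lemma~\ref{l>j}, $t'$ is also the time at which SPT starts $J_k$), and argues that $r_j > t > t'$ together with $p_l > p_j$ is incompatible with POS starting $J_l$ at $t'$, yielding the contradiction on the POS side rather than by a volume-accounting bound on the SPT side. Your preliminary observations (all three jobs lie in one out-tree, $t_l \ge t' + p_k$, $r_l \le t'$) are correct and consistent with the paper, but they are scaffolding around the missing step, not a substitute for it. To repair the proposal you would need to actually prove the claim that POS cannot defer $J_k$ past the unreleased $J_j$ while remaining compact and post-ordered --- or adopt the paper's shorter contradiction at time $t'$.
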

\begin{proof}
Assume that in SPT $J_l \prec J_j$. 
Let $t$ and $t'$ be the time at which $J_l$ start executing in SPT and POS schedules, respectively.
Then using Lemma~\ref{l>j}, $t' < t$.
Using SPT principle of the schedule at $t$, we get $r_j > t$ and $p_l > p_j$ in which case POS cannot schedule $J_l$ at time $t'$ or $p_l \leq p_j$.
Hence, this yields a direct contradiction to our assumption that $J_l \prec J_j$.

%
%
\end{proof}
\begin{myc} \label{sum>0}
For each $\delta_{jk} <0$ , there exists a job $l$ such that $\delta_{jk} + \delta_{lj} \geq 0$.  
\end{myc}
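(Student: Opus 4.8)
The plan is to reduce the claim to a short algebraic verification, using the size relationship supplied by the preceding results as the only structural input. Following the definition of the local swap cost, I read $\delta_{jk} = \frac{p_j}{p_k} - \frac{p_k}{p_j}$ and, analogously, $\delta_{lj} = \frac{p_l}{p_j} - \frac{p_j}{p_l}$. The job $J_l$ I use is exactly the one furnished by Lemma~\ref{l>j}, which guarantees both its existence and the crucial inequality $p_l \geq p_k > p_j$. Everything else is computation.

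First I would combine the two swap costs into a single expression,
\[
\delta_{jk} + \delta_{lj} = \frac{p_j}{p_k} - \frac{p_k}{p_j} + \frac{p_l}{p_j} - \frac{p_j}{p_l},
\]
and then regroup the four terms so as to expose a common factor of $(p_l - p_k)$. Pairing $\frac{p_l}{p_j}$ with $-\frac{p_k}{p_j}$ and $\frac{p_j}{p_k}$ with $-\frac{p_j}{p_l}$ yields
\[
\delta_{jk} + \delta_{lj} = \frac{p_l - p_k}{p_j} + \frac{p_j(p_l - p_k)}{p_k p_l} = (p_l - p_k)\left(\frac{1}{p_j} + \frac{p_j}{p_k p_l}\right).
\]
Since Lemma~\ref{l>j} gives $p_l \geq p_k$, the factor $(p_l - p_k)$ is nonnegative, and the second factor is a sum of strictly positive terms (all processing times being positive). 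Hence the product is nonnegative, which is precisely $\delta_{jk} + \delta_{lj} \geq 0$.

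I expect no serious obstacle inside the corollary itself: once $J_l$ is identified and the bound $p_l \geq p_k$ is in hand, the inequality falls out of a two-line factorization. The genuine difficulty lies upstream, in Lemma~\ref{l>j}, Observation~\ref{order-pos}, and Lemma~\ref{J_j,J_l}, which must certify that the offsetting job $J_l$ really exists and is positioned so that $\delta_{lj}$ is a legitimate term of the SPT-to-POS transformation (namely $J_j \prec J_l$ in SPT while $J_l \prec J_j$ in POS). Assuming those results as stated, the corollary is immediate. The remaining work toward the global bound $\sum \delta_j \geq 0$ is to verify that each negative $\delta_{jk}$ can be paired with a \emph{distinct} positive $\delta_{lj}$ without reusing the same offset twice; that bookkeeping, however, lies outside the scope of the present statement.
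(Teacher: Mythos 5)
Your proof is correct and follows essentially the same route as the paper: invoke Lemma~\ref{l>j} (and Lemma~\ref{J_j,J_l}) to obtain the offsetting job $J_l$ with $p_l \geq p_k$, then factor $\delta_{jk}+\delta_{lj}$ as $(p_l-p_k)\frac{p_kp_l+p_j^2}{p_jp_kp_l}\geq 0$, which is exactly the paper's computation (your grouping even fixes a small typo in the paper's displayed intermediate expression). Your closing remark about the need for an injective pairing is also well placed, as that bookkeeping is indeed deferred to Proposition~\ref{all>0}.
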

\begin{proof}
It follows from Lemma~\ref{l>j} and Lemma~\ref{J_j,J_l} that there is always a job $l$ such that the ordering of jobs in SPT is $J_k \prec J_j \prec J_l$ while the same set of jobs are executed in order $J_l \prec J_j \prec J_k$ in POS. Moreover, 
the transformation of $\delta_{lj}$ can be coupled with $\delta_{jk}$ such that:
\begin{align} \nonumber 
\delta_{lj} + \delta_{jk} &= \frac{p_l}{p_j} - \frac{p_j}{p_l} + \frac{p_j}{p_k} - \frac{p_k}{p_l} 
= \frac{p_k p_l + p_j^2}{p_j p_k p_l} (p_l - p_k) \geq 0 \nonumber
\end{align}
The last inequality follows from Lemma~\ref{l>j}.
\end{proof}
\begin{myp} \label{all>0}
 $\sum \delta_{j} \geq 0$
\end{myp}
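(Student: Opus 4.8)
The plan is to show that the total change in stretch accumulated while transforming SPT into POS, namely $\sum \delta_j$ summed over the successive ``move the first differing job to the front'' steps, is nonnegative; since each step merely reorders jobs and adds exactly $\delta_j$ to the objective, this yields that the total stretch of SPT is at most that of POS. The first thing I would do is rewrite this accumulated change as a single sum over \emph{inverted pairs}: pairs of jobs $(a,b)$ that appear in one relative order in SPT and in the opposite order in POS. By Corollary~\ref{pos-srpt-order} both schedules run the same jobs inside each out-tree interval, so the transformation is just the sorting of one permutation into another, and each inverted pair is resolved exactly once, contributing $\delta_{ab} = \frac{p_a}{p_b} - \frac{p_b}{p_a}$ where $a$ is the job that is earlier in POS. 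Hence $\sum \delta_j = \sum_{(a,b)\text{ inverted}} \delta_{ab}$, and I would split this into the nonnegative terms ($p_a \geq p_b$) and the negative terms ($p_a < p_b$).

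The heart of the argument is to cancel every negative term against a distinct nonnegative one. For a negative inverted pair $(j,k)$ --- so POS has $J_j \prec J_k$, SPT has $J_k \prec J_j$, and $p_j < p_k$ --- Lemma~\ref{l>j}, Observation~\ref{order-pos} and Lemma~\ref{J_j,J_l} together produce a job $J_l$ whose order relative to $J_j$ and $J_k$ is completely reversed between the two schedules (SPT: $J_k \prec J_j \prec J_l$; POS: $J_l \prec J_j \prec J_k$) and which satisfies $p_l \geq p_k > p_j$. In particular $(l,j)$ is itself an inverted pair with $\delta_{lj} \geq 0$, and Corollary~\ref{sum>0} gives $\delta_{jk} + \delta_{lj} \geq 0$. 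I would therefore set up an assignment sending each negative pair $(j,k)$ to the nonnegative pair $(l,j)$ and, summing the paired inequalities over all negative terms, bound the whole sum below by the sum of the unpaired nonnegative terms, which is itself $\geq 0$.

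The step I expect to be the main obstacle is proving that this assignment can be made \emph{injective}. Corollary~\ref{sum>0} only guarantees the \emph{existence} of a compensating job $J_l$ for each negative pair; but two different negative pairs sharing the same small job $J_j$ could a priori be handed the same witness $J_l$, which would illegitimately spend the single quantity $\delta_{lj}$ twice. To close this gap I would either make the choice of $J_l$ canonical (for a fixed $J_j$, selecting witnesses monotonically in $p_k$ so that distinct $k$'s receive distinct $l$'s) or recast the cancellation as a bipartite matching between negative and nonnegative inverted pairs and verify Hall's condition using the size constraint $p_l \geq p_k$ together with the reversed-triple structure furnished by Lemma~\ref{l>j} and Lemma~\ref{J_j,J_l}. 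Once injectivity is in hand, the termwise inequalities of Corollary~\ref{sum>0} sum directly to $\sum \delta_j \geq 0$.
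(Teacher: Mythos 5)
Your setup --- rewriting $\sum \delta_j$ as a sum over inverted pairs, splitting into nonnegative and negative terms, and invoking Lemma~\ref{l>j}, Observation~\ref{order-pos}, Lemma~\ref{J_j,J_l} and Corollary~\ref{sum>0} to produce a compensating pair $(l,j)$ for each negative pair $(j,k)$ --- is exactly the paper's strategy, and you correctly identify the crux: the map from negative terms to compensators need not be one-to-one. The gap is in how you propose to close it. Both of your suggested fixes (a canonical, monotone choice of witness; a bipartite matching certified by Hall's condition) aim to establish an \emph{injective} assignment, but such an injection need not exist: several jobs $k_1,\dots,k_q$, all larger than $J_j$ and all inverted with it, may have only a single job $J_l$ available that is scheduled after all of them in SPT and before $J_j$ in POS. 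In that configuration there is exactly one nonnegative pair $(l,j)$ to pay for $q$ negative pairs, Hall's condition fails, and no canonical tie-breaking can manufacture the missing witnesses. Lemma~\ref{l>j} only ever guarantees \emph{existence} of one such $l$, never $q$ distinct ones.

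The paper's proof accepts the many-to-one mapping and instead strengthens the \emph{size} bound on the shared compensator: rerunning the argument of Lemma~\ref{l>j} with the aggregate $\sum_{i} p_{k_i}$ in place of a single $p_k$, it obtains $p_l \geq \sum_{i} p_{k_i}$, from which $\delta_{lj} + \sum_{i}\delta_{jk_i} \geq 0$ follows by the same algebra as in Corollary~\ref{sum>0}, since the dominant term $p_l/p_j$ now absorbs all of $\sum_{i} p_{k_i}/p_j$. So the missing idea is not injectivity but a quantitative statement about the one job that gets reused: whenever $J_l$ must compensate several $k_i$'s, it is necessarily long enough to compensate their sum. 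Without that, the termwise inequalities of Corollary~\ref{sum>0} cannot simply be added up, and your argument does not go through at the decisive step.
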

\begin{proof}
It follows from corollary~\ref{sum>0} that for every $\delta_{jk} < 0$, there exists a job $l$ such that $\delta_{jk}
+ \delta_{lj} \geq 0$. Thus, we say that job $k$ is matched to job $l$. Our lemma holds if there exists an injective mapping for all jobs $k_i$ who $\delta_{jk_i} \leq 0$. In case if there are jobs that are surjectively mapped to the same job $l$. Then using a similar proof construction as given on Lemma~\ref{l>j}, where $p_k$ can be replaced with $\sum p_{k_i}$, we get that $p_l \geq \sum\limits_{\forall k_i \in \mathcal{K}} p_{k_i}$.  Consequently, the decrease in the total stretch due to $k_i$'s can be mapped with $\delta_{lj}$ such that $\delta_{lj} + \sum\limits_{\forall k_in \in \mathcal{K}} \delta_{jk_i} \geq 0$. 
%
\end{proof}
The following result is the immediate consequence of Proposition~\ref{all>0} and Lemma~\ref{pos-comp}.
\begin{myc} \label{spt-approx-pos}
SPT is a $(\Delta - \frac{1}{\Delta} +1)$-approximation with respect to SRPT. 
\end{myc}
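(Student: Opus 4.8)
The plan is to combine the two immediately preceding results by a short transitivity argument on total stretch values. First I would reinterpret Proposition~\ref{all>0} as a statement about the gap between SPT and POS. Recall that the transformation of Section~\ref{spt-ana} converts the SPT schedule into POS by repeatedly locating the first position at which the two schedules disagree and moving the job that POS runs there to the front, shifting the intervening jobs back. Each such step changes the total stretch by exactly $\delta_{j}$, so the total stretch of POS equals the total stretch of SPT plus $\sum_{j}\delta_{j}$. Since Proposition~\ref{all>0} gives $\sum_{j}\delta_{j}\geq 0$, we immediately obtain $\sum^{spt}\leq\sum^{pos}$; that is, the total stretch of SPT is no larger than that of POS.

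Next I would invoke Lemma~\ref{pos-comp}, which states that POS is $(\Delta-\frac{1}{\Delta}+1)$-competitive with respect to SRPT, i.e. $\sum^{pos}\leq(\Delta-\frac{1}{\Delta}+1)\sum^{srpt}$. Chaining the two inequalities yields
\[
\sum\nolimits^{spt}\leq\sum\nolimits^{pos}\leq\Bigl(\Delta-\frac{1}{\Delta}+1\Bigr)\sum\nolimits^{srpt},
\]
which is precisely the claimed approximation bound of SPT with respect to SRPT.

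The argument itself is routine transitivity; the only point requiring care is the bookkeeping in the first paragraph, namely that the sequence of first-difference swaps terminates at exactly the POS schedule and that the cumulative stretch change telescopes cleanly into $\sum_{j}\delta_{j}$. Since each swap strictly decreases the number of positions at which the current schedule differs from POS, termination after finitely many steps is guaranteed and the per-step increments sum to the total difference. The genuinely delicate work, establishing that this sum is nonnegative despite the negative terms $\delta_{jk}$, has already been discharged in Proposition~\ref{all>0} through the matching argument built on Lemma~\ref{l>j} and Lemma~\ref{J_j,J_l}, so no further effort is needed and the corollary follows at once.
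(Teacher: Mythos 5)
Your proof is correct and follows exactly the route the paper intends: the paper presents this corollary as an immediate consequence of Proposition~\ref{all>0} (which gives $\sum^{spt}\leq\sum^{pos}$ via the first-difference transformation) chained with Lemma~\ref{pos-comp}. Your added bookkeeping about termination of the swap sequence and the telescoping of the stretch increments is a reasonable elaboration of the same argument, not a different approach.
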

\subsection{SRPT versus Optimal offline schedule}
\begin{myl} \label{srpt-as-lower-bound}
The total stretch of SRPT schedule is no worse than that of an optimal offline non-preemptive schedule.
\end{myl}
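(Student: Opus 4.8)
The plan is to recast total stretch as a weighted completion-time objective and then argue by a left-to-right exchange that the preemptive SRPT schedule cannot be beaten by any non-preemptive schedule. Writing $w_j = 1/p_j$, we have $\sum_j s_j = \sum_j w_j(C_j - r_j)$, and since $\sum_j w_j r_j$ is schedule-independent, it suffices to prove $\sum_j w_j C_j^{\mathrm{srpt}} \le \sum_j w_j C_j^{\pi}$ for an arbitrary non-preemptive schedule $\pi$ (in particular for an optimal one). Equivalently, using $\sum_j w_j C_j = T\sum_j w_j - \int_0^T \big(\sum_{C_j \le t} w_j\big)\,dt$ for any $T$ past the last completion, it suffices to show that SRPT maximises the time-integral of completed weight.

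A first, easy observation is that the property recalled in the proof of Proposition~\ref{pos-compact} — that SRPT has completed at least as many jobs as $\pi$ at every instant — immediately yields $\sum_j F_j^{\mathrm{srpt}} \le \sum_j F_j^{\pi}$, so SRPT already dominates for \emph{unweighted} flow time. This is not enough for stretch, because the weights $w_j=1/p_j$ differ across jobs: a non-preemptive schedule may complete a heavy (short) job earlier than SRPT by dedicating the machine to it, so the completed-weight profiles are not pointwise comparable. The weighted statement therefore requires a genuine exchange rather than a pointwise domination.

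The engine of the exchange is a structural fact special to the weights $w_j = 1/p_j$: whenever SRPT preempts a job $J_a$ to start a job $J_b$, its definition forces $p_b < \rho_b(t) \le p_a$, hence $w_b = 1/p_b > 1/p_a = w_a$. Thus every preemption that distinguishes SRPT from a non-preemptive schedule advances a strictly heavier job while delaying a strictly lighter one — exactly the direction that does not increase $\sum_j w_j C_j$. Concretely, I would scan $\pi$ from left to right and, at the first instant at which $\pi$ deviates from SRPT, apply one local repair: either (i) $\pi$ idles while a job is available, in which case pulling later work forward weakly decreases every completion time; or (ii) $\pi$ runs a lighter job while SRPT runs an available heavier job of smaller remaining time, in which case I insert the SRPT preemption and, using $w_b > w_a$ together with the out-tree containment structure of Section~\ref{srpt-spt}, show the net change in $\sum_j w_j C_j$ is non-positive. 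Iterating drives $\pi$ toward SRPT without increasing the objective.

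The main obstacle is that SRPT is \emph{not} optimal for total stretch among all preemptive schedules (a density-based rule can strictly beat it), so one cannot simply assert that every step toward SRPT helps; the argument must exploit that the schedule being transformed is non-preemptive. The delicate points are therefore (a) verifying that in case (ii) only the completion times of the displaced light jobs and the advanced heavy one move, so the weight inequality truly controls the sign of the change, including when several light jobs are displaced — here the aggregate-size bound used in Proposition~\ref{all>0} reappears; and (b) handling ties in remaining time through the paper's convention that SRPT resumes the running job, which is precisely the configuration in which an optimal non-preemptive schedule is itself forced to finish the current job, so that no cost-increasing repair is ever needed to reach SRPT. Establishing that the case-(ii) repair is always non-positive and that the process terminates exactly at SRPT is the crux.
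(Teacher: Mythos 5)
Your reformulation of total stretch as $\sum_j w_j C_j$ with $w_j=1/p_j$ (up to the schedule-independent term $\sum_j w_j r_j$), and your observation that each SRPT preemption advances a strictly heavier job at the expense of a strictly lighter one, are both correct and relevant. But the proof has a genuine gap, and it sits exactly where you place "the crux": the iterated-repair scheme as described cannot be carried out. After the first case-(ii) repair you have inserted a preemption, so every subsequent intermediate schedule is preemptive; yet you yourself note that the argument "must exploit that the schedule being transformed is non-preemptive," and that SRPT is not optimal among preemptive schedules, so there is no license to claim that later steps toward SRPT are non-increasing. You have therefore not shown that the sequence of repairs is monotone, nor that it terminates at SRPT, and no mechanism is offered to control the case where a repair displaces several jobs or interacts with later deviations. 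The single-swap sign computation ($w_a p_b - w_b\rho_a(t)\le 0$ when $p_b<\rho_a(t)\le p_a$) is fine in isolation, but it does not compose into a proof.

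The missing ingredient is the optimality of the non-preemptive schedule, which your argument never uses (you aim at an arbitrary $\pi$). The paper's proof (Appendix~\ref{Proof-srpt-as-lower-bound}) goes the other way: rather than deforming $\pi$ into SRPT, it shows that an \emph{optimal} non-preemptive schedule never puts itself into a configuration where SRPT would reorder. Concretely, if the optimal schedule runs $J_k$ starting at $t$ and then a shorter $J_{k+1}$ with $t<r_{k+1}<t+p_k$, comparing its cost with the alternative of idling until $r_{k+1}$ and running $J_{k+1}$ first yields $r_{k+1}-t\ge p_k-p_{k+1}$; hence $J_k$'s remaining time at $r_{k+1}$ is at most $p_{k+1}$ and SRPT does not preempt either, so the two schedules agree pairwise, with a separate argument at idle periods. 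That quantitative consequence of optimality is what lets the comparison close locally, without ever passing through preemptive intermediates. To salvage your route you would need either a proof that SRPT dominates \emph{every} non-preemptive schedule via a repair order that is globally monotone (which your own counter-observation about preemptive non-optimality puts in doubt as a naive iteration), or you must import an optimality-based constraint of the above kind — at which point you have essentially reconstructed the paper's argument.
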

\begin{proof}
The methodology used in the proof is based on comparison of optimal and SRPT schedules at successive idle times. 
The detailed proof is provided in Appendix~\ref{Proof-srpt-as-lower-bound}. 
\end{proof}
\begin{myt} \label{SPT-competitive}
SPT is $(\Delta - \frac{1}{\Delta}+1)$-competitive with respect to the non-preemptive optimal total stretch. 
\end{myt}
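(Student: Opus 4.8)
The plan is to assemble the theorem directly from the two structural results already established for the single-machine setting, namely Corollary~\ref{spt-approx-pos} and Lemma~\ref{srpt-as-lower-bound}, by chaining them into a single line of inequalities. The statement requires no new machinery: its entire content is the observation that the preemptive SRPT schedule plays a double role, serving simultaneously as an upper bound for SPT (up to the factor $\Delta-\frac{1}{\Delta}+1$) and as a lower bound for the offline optimum among non-preemptive schedules.

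Concretely, I would fix an arbitrary instance $\mathcal{I}$ and write $\sum^{spt}$, $\sum^{srpt}$ and $\sum^{opt}$ for the total stretch attained by SPT, by SRPT, and by the optimal non-preemptive schedule, respectively. First I would invoke Corollary~\ref{spt-approx-pos}, which gives $\sum^{spt} \leq (\Delta-\frac{1}{\Delta}+1)\sum^{srpt}$; recall that this corollary is itself the composite of Lemma~\ref{pos-comp} (POS is $(\Delta-\frac{1}{\Delta}+1)$-competitive against SRPT) and Proposition~\ref{all>0} (which, via $\sum\delta_j\geq 0$, certifies that each transformation step from SPT toward POS has nonnegative net effect, so $\sum^{spt}\leq\sum^{pos}$). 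Next I would apply Lemma~\ref{srpt-as-lower-bound} to obtain $\sum^{srpt}\leq\sum^{opt}$.

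Chaining the two bounds yields $\sum^{spt}\leq(\Delta-\frac{1}{\Delta}+1)\sum^{srpt}\leq(\Delta-\frac{1}{\Delta}+1)\sum^{opt}$. Since $\mathcal{I}$ was arbitrary, the worst-case ratio over all instances is at most $\Delta-\frac{1}{\Delta}+1$, which is exactly the claimed competitiveness of SPT against the non-preemptive optimum.

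Because both ingredients are already in hand, there is no residual analytical obstacle at the level of this theorem; the substantive work has been pushed into the earlier results. If pressed to name the most delicate link, it is the use of a preemptive schedule as a lower bound for the non-preemptive optimum in Lemma~\ref{srpt-as-lower-bound}: this is not evident a priori, since SRPT is permitted to preempt while $\sum^{opt}$ is taken over schedules that are not, and one must argue—by comparing the number of jobs completed by the two schedules at successive idle instants—that the extra freedom of preemption can only lower the total stretch. Granting that lemma, the present theorem follows immediately as the composition described above.
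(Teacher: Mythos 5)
Your proposal is correct and follows exactly the paper's own argument: the theorem is obtained by chaining Corollary~\ref{spt-approx-pos} with Lemma~\ref{srpt-as-lower-bound}, using SRPT as the intermediate quantity that upper-bounds SPT (up to the factor $\Delta-\frac{1}{\Delta}+1$) and lower-bounds the non-preemptive optimum. Your added commentary on why the preemptive lower bound is the delicate ingredient is accurate but concerns the earlier lemma, not this theorem.
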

\begin{proof}
The theorem follows from combination of Lemma~\ref{srpt-as-lower-bound} and Corollary \ref{spt-approx-pos}.\
\end{proof}
%
%
%
%
%
%
%
\section{Analysis for $Pm|r_i|\sum s_i$} \label{m-machine}
The construction of the directed ordered forest is no more feasible in case of $m$ machines since the jobs may migrate onto different machines. 
Therefore, we propose in section~\ref{construction-omms} a transformation of SPT on $m$-identical machines to a schedule (called OMMS) on a virtual machine with $m$ times the speed of single machine.
Later, it is shown in section~\ref{omms-sptm} that OMMS schedule is a $(\Delta-\frac{1}{\Delta}+1)$-approximation with respect to SPT on virtual machine (SPTM) with $m$ times speed.
Using the lower bound established by Chou et al.~\cite{Chou:2006} on weighted sum flow problem, finally we show that SPT is $\Delta - \frac{1}{\Delta}+ \frac{3}{2}-\frac{1}{2m}$-competitive. 
\subsection{Intermediate schedule OMMS} \label{construction-omms}
\par To every instance $\mathcal{I}$ of the m-identical parallel machines problem, we associate an instance $\mathcal{I}^m$ with the same job set $J$ and for each job $J_j \in J$ , the processing time of $J_j$ is $p_j^m = p_j/m$. The job release dates $r_j$ are unchanged.  Intuitively, the $m$-identical parallel machines are replaced by a virtual machine with speed $m$ times the speed of single machine. Let $C_j^m$ and $F_j^m$ denote the completion time and flow time (defined as $C_J^m - r_j$) of job $j\in J$ on virtual machine. We now define a general rule of transforming any $m$-identical machine schedule into a feasible schedule on virtual machine. 
\begin{myd}
We construct a feasible schedule (called OMMS schedule) for instance $\mathcal{I}^m$ on a $m$-speed virtual machine by transforming SPT schedule for instance $\mathcal{I}$ on $m$ identical parallel machines. The jobs are executed in OMMS in the increasing order of their starting time in SPT, where ties are broken by executing the jobs in non-decreasing order of processing time. 
\end{myd}
\begin{myl} \label{faster-machine-flow}
\begin{align*}
\sum\limits^{spt}_m \leq \sum\limits^{omms} + (1-\frac{1}{m})n
\end{align*}
where $\sum\limits^{spt}_m$ denotes the total stretch for SPT on $m$-identical parallel machines and $\sum\limits^{omms}$ denotes the total stretch of OMMS on a virtual machine.
\end{myl}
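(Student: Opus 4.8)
The plan is to read both totals as weighted flow times with the stretch weights $w_j=1/p_j$, and then to reduce the claim to a statement about waiting times. On the $m$ machines $s_j^{spt}=F_j^{spt}/p_j$, whereas on the speed-$m$ virtual machine the processing requirement of $J_j$ is $p_j/m$, so each job contributes $F_j^{omms}/(p_j/m)=m\,F_j^{omms}/p_j$ to $\sum^{omms}$. Writing $F_j^{spt}=W_j^{spt}+p_j$ and $F_j^{omms}=W_j^{omms}+p_j/m$, where $W_j^{spt}=S_j^{spt}-r_j$ and $W_j^{omms}=S_j^{omms}-r_j$ are the waiting times and $S_j$ denotes a start time, each side acquires an identical term $\sum_j 1=n$ from the processing parts, which cancels; since $\sum_j w_jp_j=n$ the lemma is thus equivalent to
\begin{align*}
\sum_j \frac{W_j^{spt}}{p_j}\ \le\ m\sum_j \frac{W_j^{omms}}{p_j}+\Bigl(1-\frac1m\Bigr)n .
\end{align*}
The only link between the two schedules is that OMMS executes the jobs in the order of their SPT start times on a machine of speed $m$, so I would drive the whole comparison from that order.

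The structural tool I would establish is a lower bound on every OMMS start time coming from the SPT busy-period structure. Fix a job $J_j$ with $S_j^{spt}>r_j$. Since $J_j$ is available but not running throughout $[r_j,S_j^{spt})$, the non-waiting rule forces all $m$ machines to be busy on that whole interval; let $[\theta_j,S_j^{spt})$ be the maximal interval on which all $m$ machines are busy, so $\theta_j\le r_j$. At time $\theta_j$ at most $m-1$ jobs are carried over from before $\theta_j$, and every other job run on $[\theta_j,S_j^{spt})$ is released inside the interval and starts before $J_j$, hence is ordered before $J_j$ in OMMS. The carry-over absorbs at most $(m-1)(S_j^{spt}-\theta_j)$ of the $m(S_j^{spt}-\theta_j)$ units of work SPT performs on the interval, leaving at least $S_j^{spt}-\theta_j$ units released after $\theta_j$ that the speed-$m$ machine must still clear before it reaches $J_j$. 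As all of this work appears at times $\ge\theta_j$ and is processed at rate $m$, this yields
\begin{align*}
S_j^{omms}\ \ge\ \theta_j+\frac1m\bigl(S_j^{spt}-\theta_j\bigr),
\end{align*}
while jobs with $S_j^{spt}=r_j$ satisfy $s_j^{spt}=1\le s_j^{omms}$ outright.

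The step I expect to be the real obstacle is turning this per-job relation into the additive global bound. The inequality above only gives $W_j^{spt}-m\,W_j^{omms}\le (m-1)(r_j-\theta_j)$, i.e. a per-job stretch surplus of at most $(m-1)(r_j-\theta_j)/p_j$, and this does \emph{not} sum to $(1-\frac1m)n$: a job released deep inside a long busy period has a large $r_j-\theta_j$, so the estimate is hopelessly loose, and indeed the claim cannot hold job by job, since an individual surplus can exceed the budget $1-\frac1m$ (a job that waits in SPT behind many shorter jobs but is reached quickly in OMMS once its predecessors are cleared at speed $m$). The statement is therefore genuinely aggregate: the positive surpluses must be paid for by the \emph{serialization delay} that OMMS necessarily incurs, since collapsing the whole start order onto one machine inflates $m\sum_j W_j^{omms}/p_j$ exactly on the jobs ordered ahead of the troublesome ones. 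Concretely I would charge, over each maximal all-$m$-busy period, the few boundary jobs that overspend against the slack generated within the same period by the large OMMS waits of the jobs preceding them, using the start-time bound only to certify that a period's surplus stays inside that period. Making this charging exact, so that it closes at the clean value $(1-\frac1m)n$, is the crux of the argument.
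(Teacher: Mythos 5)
Your proposal does not close the proof. Everything up to the start--time bound $S_j^{omms}\ge\theta_j+\frac1m(S_j^{spt}-\theta_j)$ is sound, but the last paragraph is a plan for a charging argument rather than an argument: you never exhibit the charging scheme over a busy period, never show that the boundary jobs' surplus is covered by the slack of their predecessors, and you yourself identify closing the bound at the exact value $(1-\frac1m)n$ as the unresolved crux. As written, the lemma is not proved. There is also a definitional mismatch that changes the statement you are proving: throughout Section~\ref{m-machine} the paper treats every total as a weighted flow time with \emph{fixed} weights $w_j=1/p_j$, so the contribution of $J_j$ to $\sum^{omms}$ is $F_j^{omms}/p_j$ with the original $p_j$ in the denominator, not $F_j^{omms}/(p_j/m)$. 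Your extra factor of $m$ makes the right-hand side larger (a weaker claim) but is incompatible with the downstream use of the lemma, e.g.\ Corollary~\ref{opt-lb} needs $\sum_j w_jp_j=n$ with the original processing times. Under the paper's convention the lemma reduces to the per-job inequality $C_j^{spt}\le C_j^{omms}+(1-\frac1m)p_j$, i.e.\ $S_j^{omms}\ge S_j^{spt}$.

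For comparison, the paper's own proof is exactly that one-line per-job claim: since each processing time shrinks from $p_j$ to $p_j/m$, the completion time allegedly drops by at most $(1-\frac1m)p_j$. Your instinct that the per-job route is problematic is aimed at the right place: that argument ignores that a job's \emph{start} can also move earlier in OMMS, and your busy-period bound only certifies $S_j^{omms}\ge\theta_j+\frac1m(S_j^{spt}-\theta_j)$, which is strictly weaker than $S_j^{omms}\ge S_j^{spt}$ whenever $\theta_j<S_j^{spt}$. Indeed one can check on a small instance with $m=2$ (a unit job and a length-$10$ job at $r=0$, a length-$5$ job at $r=5$, and a unit job at $r=6$) that $S_4^{spt}=10$ while $S_4^{omms}=8$, so the per-job claim fails, and with the fixed-weight convention even the aggregate inequality of the lemma fails on that instance. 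So you have correctly located the weak point of the per-job approach, but you have neither repaired it with a completed aggregate argument nor reproduced the paper's (much simpler, but questionable) proof; the gap in your write-up is precisely the step that would have to do the real work.
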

\begin{proof}
While transforming the schedule from SPT to OMMS schedule, the processing time of each job $j$ is reduced from $p_j$ to $p_j/m$. Therefore, the difference between $C_j$  and $C_j^m$ of every job $j \in J$ is upper bound by $(1-\frac{1}{m})p_j$.  Taking summation over all $j \in J$, we obtain $\sum\limits^{spt}_m \leq \sum\limits^{omms} + (1-\frac{1}{m})n$.
\end{proof}
\subsection{Relationship between OMMS and SPT on virtual machine} \label{omms-sptm}
\par Here, we define the \textit{block} structure for OMMS schedule based on \textit{compactness} as defined in Definition~\ref{def-compact}. Our approach consists of partitioning the set of jobs into \textit{blocks} $B(1)$,$B(2)$ and so on, such that jobs belonging to any block can be scheduled regardless of jobs belonging to other blocks. Finally, we show that OMMS is $(\Delta+1-\frac{1}{\Delta})$-approximation to SPT on a $m$-speed virtual machine. 
\par Let $R=\{r(1), r(2), r(3),...r(n')\}$ where $n'\leq n$ be the set of all different release times. Assume w.lo.g that $r_1< r_2 < ,...< r_{n'}$. We partition the jobs according to their release times into set of blocks. Let $Q(i) = \{J_j: r_j = r(i)\}$, $i = 1,...,n'$ denotes the set of jobs released at time $r(i)$.  
The block $B(w)$ is defined as follows:
\begin{align}
B(w) &= \bigcup\limits_{i=b_{w-1}+1,..b_w}Q(i) \nonumber 
\end{align}
where $b_w$ is the smallest positive integer such that 
\begin{align}
r(b_{w-1}+1) + \sum\limits_{i=b_{w-1},....,b_w}\sum\limits_{J_j\in Q(i)}  p_j^m < r(b_w)  \nonumber
\end{align}
Intuitively, all jobs in $B(w)$ can be \textit{compactly} scheduled between $r_{B(w)} = min_{J_j\in B(w)} r_j$  and first job of $B(w+1)$ is released. Hence jobs belonging to the first block $B(1)$ could be completed at most time $r(b_1+1)$.
\par Thus, we focus our attention only to jobs belonging to \emph{single block}. We re-define $\mathcal{I}^m$ to denote the instance of jobs in a block.
Hence, $n$ denotes the number of jobs and $J = \{J_1,J_2,...,J_n\}$ denotes the set of jobs in $\mathcal{I}^m$. 
Our next objective is to replace OMMS in lemma~\ref{faster-machine-flow}.
\begin{myd}
We construct a new schedule (SPTM) by scheduling all the jobs of instance $\mathcal{I}^m$ according to SPT rule on $m$-speed virtual machine. 
\end{myd} 
\begin{myo}\label{omms-sptm-block}
OMMS and SPTM have the same sets of jobs in each block.
\end{myo}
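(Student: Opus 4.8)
The plan is to observe that the block partition $\{B(1), B(2), \dots\}$ is determined entirely by the instance $\mathcal{I}^m$ — that is, by the release times $r(i)$, the groups $Q(i)$, and the rescaled processing times $p_j^m$ — and not by any particular sequencing of the jobs. Since both OMMS and SPTM schedule exactly the jobs of $\mathcal{I}^m$ with these same release times and processing times, the indices $b_w$, and hence the sets $B(w)$, are literally the same objects for both schedules. The real content of the observation is therefore to verify that this instance-defined partition is honoured by each schedule, i.e. that neither schedule ever runs a job of one block inside the time window belonging to another block; once this is established, ``the set of jobs scheduled in block $B(w)$'' is unambiguous and identical for OMMS and SPTM.

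First I would show that both schedules are compact on the virtual machine in the sense of Definition~\ref{def-compact}. For SPTM this is immediate, since SPT is by definition non-idling: whenever the machine is free and a job is available, it is started. For OMMS I would argue, by a work-conservation argument analogous to that of Proposition~\ref{pos-compact} and using Lemma~\ref{faster-machine-flow}, that flattening the $m$-machine SPT schedule onto the speed-$m$ virtual machine introduces no avoidable idle time within a block. Next, I would use the defining inequality of $b_w$, namely
\[
r(b_{w-1}+1) + \sum_{i=b_{w-1}+1}^{b_w}\ \sum_{J_j\in Q(i)} p_j^m \; < \; r(b_w),
\]
which says that the total rescaled work of the jobs of $B(w)$, started no earlier than the earliest release $r_{B(w)} = r(b_{w-1}+1)$ and processed without idling, is completed strictly before $r(b_w)$, the instant at which the first job of $B(w+1)$ is released. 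Combined with compactness, this guarantees that in each of OMMS and SPTM every job of $B(w)$ finishes before any job of $B(w+1)$ becomes available.

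Consequently, in both schedules the jobs of distinct blocks occupy disjoint, non-overlapping time windows: no job can cross a block boundary, because the later block's jobs are not yet released when the earlier block completes, while the earlier block's jobs are all released within its own window. Since block membership was fixed by the instance alone, the two schedules place exactly the same jobs into each $B(w)$, which proves the observation. I expect the main obstacle to be the compactness of OMMS: because it is obtained by collapsing an $m$-machine SPT schedule onto a single faster machine rather than by a native single-machine rule, showing that it leaves no idle time inside a block requires a volume argument comparing the work completed by the $m$ machines over any sub-interval with that completable by the speed-$m$ virtual machine, and it is precisely here that the block-breaking condition must be invoked with care.
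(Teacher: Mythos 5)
Your proposal is correct and matches the paper's (implicit) reasoning: the paper states this as an unproved observation precisely because, as your first paragraph notes, the blocks $B(w)$ are defined purely from the instance data $r(i)$, $Q(i)$ and $p_j^m$, so they are literally the same sets of jobs for any schedule of $\mathcal{I}^m$. The additional compactness and boundary-respecting analysis you sketch (in particular the volume argument for OMMS) is not required for the statement as written --- it is the substance that the later per-block lemmas rely on --- so the observation already follows from your opening argument.
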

\begin{myd}
We construct a new schedule (D-SPTM) by delaying the start of each job in SPTM by $\Delta - \frac{1}{\Delta} $ time units later. 
\end{myd}
\begin{myl}\label{omms-approx-sptm}
\begin{align*}
\sum\limits^{omms} \leq \sum\limits^{d-sptm} \leq (\Delta - \frac{1}{\Delta} + 1)\sum\limits^{sptm}
\end{align*}
where $\sum\limits^{\eta}$ denotes the total stretch of schedule $\eta \in \{$omms, d-sptm, sptm$\}$.  
\end{myl}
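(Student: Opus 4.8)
The plan is to prove the two inequalities separately and chain them through the auxiliary schedule D-SPTM. The right-hand inequality $\sum^{d-sptm} \le (\Delta - \frac{1}{\Delta}+1)\sum^{sptm}$ is the direct one and mirrors the closing bookkeeping of Lemma~\ref{pos-comp}. Since D-SPTM is obtained from SPTM by delaying the start, hence the completion and therefore the flow time, of every job by exactly $\Delta - \frac{1}{\Delta}$, I would write $s_j^{d-sptm} = s_j^{sptm} + \frac{\Delta - 1/\Delta}{p_j^m}$ for each job $J_j$ of the block. Because the smallest (virtual) processing time is normalized to $1$, the added term is at most $\Delta - \frac{1}{\Delta}$, so that $\sum^{d-sptm} \le \sum^{sptm} + (\Delta - \frac{1}{\Delta})n$, where $n$ is the number of jobs in the block. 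Finally, as every stretch satisfies $s_j^{sptm}\ge 1$ (flow time is at least processing time), we have $n \le \sum^{sptm}$, and substituting yields $\sum^{d-sptm} \le (\Delta - \frac{1}{\Delta}+1)\sum^{sptm}$, exactly the step used at the end of Lemma~\ref{pos-comp}.

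The left-hand inequality $\sum^{omms}\le\sum^{d-sptm}$ is the heart of the lemma, and the strategy is to prove the stronger \emph{pointwise} statement $C_j^{omms}\le C_j^{sptm} + (\Delta - \frac{1}{\Delta})$ for every job $J_j$ of the block; dividing by $p_j^m$ and summing then gives $\sum^{omms}\le\sum^{d-sptm}$ immediately, since $s_j^{omms}=F_j^{omms}/p_j^m$ while $F_j^{d-sptm}=F_j^{sptm}+(\Delta-\frac{1}{\Delta})$. To establish the pointwise bound I would first use the block construction together with Definition~\ref{def-compact} to argue that both OMMS and SPTM are compact on the block and, by Observation~\ref{omms-sptm-block}, schedule the same set of jobs; hence, as functions of time, both process the same cumulative amount of released work and differ only in execution order: OMMS follows the flattened $m$-machine SPT start-time order, whereas SPTM follows pure shortest-processing-time order on the virtual machine.

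Given this, for a fixed job $J_j$ I would compare the jobs completed before $J_j$ in OMMS with those completed before it in SPTM. The only jobs that can delay $J_j$ in OMMS relative to SPTM are the \emph{longer} jobs that $m$-machine SPT had started on parallel machines before $J_j$ but that pure virtual SPT would defer past $J_j$; the task is to show that their total virtual processing sitting ahead of $J_j$ is at most $\Delta - \frac{1}{\Delta}$. Here I would invoke the structure of SPT on $m$ machines, namely that at most $m-1$ such longer jobs can be simultaneously in progress when $J_j$ starts, together with the ratio bound $p^m_{\max}/p^m_{\min}\le\Delta$, paralleling the way the quantity $\frac{p_{\gamma(T_a)}}{p_k}-\frac{p_k}{p_{\gamma(T_a)}}\le\Delta-\frac{1}{\Delta}$ controlled the reordering cost in Lemma~\ref{pos-comp}.

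The main obstacle is precisely this last step: translating the $m$-machine SPT start-time order that defines OMMS into a bound, on the \emph{virtual single machine}, for the extra longer work that can be enqueued ahead of any given job. Unlike the single-machine case there is no directed ordered forest to lean on, so the argument must directly reconcile the parallel-machine ordering with the virtual SPT ordering and show that the accumulated out-of-order delay never exceeds the fixed slack $\Delta - \frac{1}{\Delta}$ that D-SPTM builds in. Once this completion-time domination is secured, the two inequalities combine to give the stated chain.
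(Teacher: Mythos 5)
Your treatment of the right-hand inequality is essentially the paper's: the uniform delay of $\Delta-\frac{1}{\Delta}$ adds at most $\Delta-\frac{1}{\Delta}$ to each stretch, and $s_j\ge 1$ turns the additive $(\Delta-\frac{1}{\Delta})n$ into the multiplicative factor, exactly as at the end of Lemma~\ref{pos-comp}. (One small caution: the smallest \emph{virtual} processing time is $1/m$, not $1$; the bound on the added term only works if stretches on the virtual machine are taken with the original $p_j$ in the denominator, which is the convention Lemma~\ref{faster-machine-flow} implicitly uses. You should state which denominator you are using.)

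The left-hand inequality, however, is where your proposal has a genuine gap, and you acknowledge as much. You aim for the pointwise domination $C_j^{omms}\le C_j^{sptm}+(\Delta-\frac{1}{\Delta})$ for every job, but this is a strictly stronger claim than $\sum^{omms}\le\sum^{d-sptm}$, and the argument you sketch does not deliver it: bounding the ``longer work enqueued ahead of $J_j$'' by the number of jobs simultaneously in progress on the $m$ machines gives up to $(m-1)\cdot\frac{\Delta}{m}=\Delta(1-\frac{1}{m})$ of extra virtual work, which already exceeds $\Delta-\frac{1}{\Delta}$ whenever $\Delta^2<m$, and nothing in the sketch prevents such out-of-order delay from accumulating across successive jobs of the block. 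The paper does not attempt any per-job bound. Instead it reuses the \emph{amortized} exchange machinery of Section~\ref{spt-ana}: iteratively remove the first difference between OMMS and D-SPTM, decompose the change in total stretch into local swap terms $\delta_{jk}=\frac{p_j}{p_k}-\frac{p_k}{p_j}$, and pair every negative term with a compensating positive term as in Lemma~\ref{l>j}, Lemma~\ref{J_j,J_l}, Corollary~\ref{sum>0} and Proposition~\ref{all>0}, so that only the aggregate $\sum\delta_j\ge 0$ is needed. If you want to complete your proof you should either carry out that pairing argument for the OMMS/D-SPTM pair (using Observation~\ref{omms-sptm-block} to restrict attention to a single block) or find an actual proof of your pointwise claim; as written, the central step is named but not established.
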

\begin{proof}
The first inequality can be proved by removing the first difference between D-SPTM and OMMS schedules as shown for SPT and POS schedules in section~\ref{spt-ana}. The second inequality is the direct consequence of the construction of D-SPTM from SPTM, where each job's completion time is increased by $\Delta - \frac{1}{\Delta}$ with respect to SPTM. 
\end{proof}
%
%
%
%
\begin{myc} \label{spt-sptm}
Combining the results of lemma~\ref{faster-machine-flow} and lemma~\ref{omms-approx-sptm}, it follows that:
\begin{align}
\sum\limits^{spt}_m \leq (\Delta - \frac{1}{\Delta}+1)\sum\limits^{sptm} + (1 - \frac{1}{m})n   \nonumber
\end{align} 
where $\sum\limits^{sptm}$ denotes the total stretch for SPTM. 
\end{myc}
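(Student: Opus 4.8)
The plan is to derive the bound by directly chaining the two inequalities established in Lemma~\ref{faster-machine-flow} and Lemma~\ref{omms-approx-sptm}, since the corollary is a pure composition of the two.

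First I would reconcile the fact that Lemma~\ref{omms-approx-sptm} was established for a \emph{single} block (the re-defined instance $\mathcal{I}^m$), whereas the corollary concerns the total stretch over the whole job set. By construction the blocks $B(1),B(2),\ldots$ partition $J$, and thanks to the compactness property guaranteeing that every job of a block completes before the first job of the next block is released, each block is scheduled independently of the others in both OMMS and SPTM. Since total stretch is additive over jobs, I would sum the per-block inequality $\sum\limits^{omms} \leq (\Delta - \frac{1}{\Delta}+1)\sum\limits^{sptm}$ over all blocks to obtain the same inequality for the global total stretches of OMMS and SPTM.

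Next I would invoke Lemma~\ref{faster-machine-flow}, namely $\sum\limits^{spt}_m \leq \sum\limits^{omms} + (1-\frac{1}{m})n$, whose additive term already refers to the global job count, and substitute the global form of Lemma~\ref{omms-approx-sptm} into its right-hand side. By transitivity of $\leq$ this yields
\begin{align}
\sum\limits^{spt}_m \leq \left(\Delta - \frac{1}{\Delta}+1\right)\sum\limits^{sptm} + \left(1-\frac{1}{m}\right)n, \nonumber
\end{align}
which is precisely the claimed bound. I expect no substantive obstacle here: the corollary is a one-line composition of two already-proved inequalities. The only point requiring mild care is the passage from the per-block statement of Lemma~\ref{omms-approx-sptm} to its global form, which hinges on the block independence ensured by compactness; once that summation is justified, the result is immediate.
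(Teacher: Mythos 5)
Your proof is correct and follows exactly the route the paper intends: the corollary is stated as a direct composition of Lemma~\ref{faster-machine-flow} and Lemma~\ref{omms-approx-sptm}, and your chaining of the two inequalities is precisely that. Your extra care in summing the per-block form of Lemma~\ref{omms-approx-sptm} over the blocks (justified by block independence and additivity of total stretch) is a detail the paper leaves implicit, and it is handled correctly.
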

Next, using the bound established by Chou et al. in~\cite{Chou:2006}, we show the relationship between SPTM and the optimal total stretch on $m$-machines.
They gave a lower bound on the weighted completion time problem of $Pm|r_j|\sum w_j C_j$ in terms of LP schedule~\footnote{which was first defined in work of Goemans et al.~\cite{Goemans02}} on $m$-speed virtual machine. 
They proved that:
\begin{align} 
\sum w_j C_j^* \geq \sum w_j C_j^{LP} + \frac{1}{2}(1-\frac{1}{m}) \sum w_j p_j 
\end{align} 
For non-preemptive scheduling problem, the order of execution of jobs in LP schedule is similar to that of SPT.  
Here, we extend the above result to the total stretch problem due to the equivalence between optimal schedules for $Pm|r_j|\sum w_j C_j$ and $Pm|r_j|\sum s_j$ where for each job $j$, $w_j = \frac{1}{p_j}$.
\begin{myc} \label{opt-lb}
Let $\sum\limits^{opt}_{m}$ denotes the optimal total stretch for $m$-identical parallel machines, then
\begin{align*} 
\sum\limits^{opt}_m &\geq \sum\limits^{sptm} + \frac{1}{2}(1 - \frac{1}{m})n 
\end{align*}
\end{myc}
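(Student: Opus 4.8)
The plan is to reduce this corollary to the weighted-completion-time bound of Chou et al.\ by setting $w_j = \frac{1}{p_j}$ for every job, and to exploit the fact that, for this particular weight choice, total stretch and weighted completion time differ only by a \emph{schedule-independent} constant. First I would write
\begin{align*}
\sum_j s_j = \sum_j \frac{C_j - r_j}{p_j} = \sum_j \frac{C_j}{p_j} - \sum_j \frac{r_j}{p_j} = \sum_j w_j C_j - \sum_j \frac{r_j}{p_j},
\end{align*}
so that the additive term $\sum_j \frac{r_j}{p_j}$ depends on the instance but not on the schedule. Consequently a schedule minimizes $\sum_j w_j C_j$ if and only if it minimizes $\sum_j s_j$, which is exactly the equivalence between $Pm|r_j|\sum w_j C_j$ and $Pm|r_j|\sum s_j$ quoted just before the statement; in particular the optimal completion times $C_j^*$ for the weighted-completion-time objective are the completion times of an optimal total-stretch schedule, so $\sum_j w_j C_j^* - \sum_j \frac{r_j}{p_j} = \sum^{opt}_m$.

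Next I would instantiate the inequality of Chou et al.\ with $w_j = \frac{1}{p_j}$. The key simplification is $\sum_j w_j p_j = \sum_j 1 = n$, so the extra term collapses to $\frac{1}{2}(1-\frac{1}{m})n$ and their bound becomes
\begin{align*}
\sum_j \frac{C_j^*}{p_j} \geq \sum_j \frac{C_j^{LP}}{p_j} + \frac{1}{2}\Bigl(1 - \frac{1}{m}\Bigr) n.
\end{align*}
Subtracting the constant $\sum_j \frac{r_j}{p_j}$ from both sides converts every $\frac{C_j}{p_j}$ into the corresponding stretch $\frac{C_j - r_j}{p_j}$, yielding $\sum^{opt}_m \geq \sum_j s_j^{LP} + \frac{1}{2}(1-\frac{1}{m})n$, where $\sum_j s_j^{LP}$ is the total stretch of the LP schedule on the $m$-speed virtual machine.

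It then remains to identify the LP schedule with SPTM. Since, as stated above, the non-preemptive LP schedule executes jobs in the same order as SPT on the virtual machine, its per-job completion times coincide with those of SPTM, and hence $\sum_j s_j^{LP} = \sum^{sptm}$. Substituting this identity gives the claimed bound $\sum^{opt}_m \geq \sum^{sptm} + \frac{1}{2}(1-\frac{1}{m})n$. I expect the main obstacle to be precisely this last identification: one must argue that ``similar order'' is strong enough that the LP schedule and SPTM agree on the quantity $\sum_j C_j / p_j$ (ideally job by job, via matching tie-breaking on equal processing times), so that their total stretches are \emph{equal} rather than merely comparable. Confirming that the placement of jobs in the LP relaxation matches the SPT rule on the virtual machine is the one step needing care; the surrounding algebra is routine.
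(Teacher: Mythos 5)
Your proposal is correct and follows essentially the same route as the paper: the paper derives this corollary directly from the Chou et al.\ inequality by setting $w_j = \frac{1}{p_j}$ (so that $\sum_j w_j p_j = n$), invoking the equivalence of $\sum w_j C_j$ and $\sum s_j$ up to the schedule-independent constant $\sum_j \frac{r_j}{p_j}$, and identifying the non-preemptive LP schedule on the $m$-speed virtual machine with SPTM. Your explicit flagging of that last identification as the delicate step is apt, since the paper passes over it with only the remark that the LP order is ``similar'' to SPT.
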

\begin{myt} \label{final-theorem}
SPT is $(\Delta - \frac{1}{\Delta} + \frac{3}{2} -\frac{1}{2m})$-competitive for total stretch on $m$-identical parallel machine. 
\end{myt}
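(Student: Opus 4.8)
The plan is to treat this final theorem as a purely algebraic assembly of the two inequalities already in hand, eliminating the intermediate quantity $\sum\limits^{sptm}$. Corollary~\ref{spt-sptm} supplies the upper bound $\sum\limits^{spt}_m \leq (\Delta - \frac{1}{\Delta}+1)\sum\limits^{sptm} + (1-\frac{1}{m})n$, while Corollary~\ref{opt-lb} supplies the matching lower bound $\sum\limits^{opt}_m \geq \sum\limits^{sptm} + \frac{1}{2}(1-\frac{1}{m})n$. Writing $\alpha = \Delta - \frac{1}{\Delta}+1$ for brevity, the idea is to rearrange the lower bound as $\sum\limits^{sptm} \leq \sum\limits^{opt}_m - \frac{1}{2}(1-\frac{1}{m})n$ and substitute it into the upper bound, so that $\sum\limits^{sptm}$ cancels and $\sum\limits^{spt}_m$ is controlled directly in terms of $\sum\limits^{opt}_m$ together with a leftover additive term.

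First I would carry out the substitution, which yields $\sum\limits^{spt}_m \leq \alpha \sum\limits^{opt}_m + (1-\frac{1}{m})n\,(1 - \frac{\alpha}{2})$. The leading summand already matches the main part $\Delta - \frac{1}{\Delta}+1$ of the claimed ratio, so everything reduces to charging the residual term $(1-\frac{1}{m})n\,(1-\frac{\alpha}{2})$ to at most $(\frac{1}{2}-\frac{1}{2m})\sum\limits^{opt}_m$. Next I would note that $\alpha \geq 1$ (since $\Delta \geq 1$ forces $\Delta - \frac{1}{\Delta} \geq 0$), hence $1 - \frac{\alpha}{2} \leq \frac{1}{2}$, so the residual is bounded by $\frac{1}{2}(1-\frac{1}{m})n$; when $\alpha \geq 2$ the coefficient $1-\frac{\alpha}{2}$ is even nonpositive, which only helps. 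Finally, since every stretch satisfies $s_j \geq 1$ we have $\sum\limits^{opt}_m \geq n$, whence $\frac{1}{2}(1-\frac{1}{m})n \leq (\frac{1}{2}-\frac{1}{2m})\sum\limits^{opt}_m$; adding this to $\alpha \sum\limits^{opt}_m$ produces exactly the $(\Delta-\frac{1}{\Delta}+\frac{3}{2}-\frac{1}{2m})$ factor.

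The genuinely hard work is already spent in the earlier sections (the forest/POS structure, the OMMS virtual-machine reduction, and the LP-based lower bound of Chou et al.), so the only thing to be careful about here is the conversion of the \emph{additive} slack $(1-\frac{1}{m})n$ into a \emph{multiplicative} factor against the optimum: this is precisely where the trivial but essential bound $\sum\limits^{opt}_m \geq n$ enters, and where one must check that the sign of $1-\frac{\alpha}{2}$ does not break the estimate. A final bookkeeping point is that Corollaries~\ref{spt-sptm} and~\ref{opt-lb} are stated for a single block, so I would close by summing the per-block inequalities over all blocks $B(1),B(2),\dots$, using that a competitive bound holding block-by-block carries over to the full instance thanks to the separability guaranteed by the release-time construction of the blocks.
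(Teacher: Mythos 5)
Your proposal is correct and follows essentially the same route as the paper's own proof in Appendix B: substitute the lower bound of Corollary~\ref{opt-lb} into Corollary~\ref{spt-sptm}, bound the residual additive term by $\frac{1}{2}(1-\frac{1}{m})n$ using $\Delta \geq 1$, and convert it to a multiplicative factor via $\sum\limits^{opt}_m \geq n$. Your expansion of the intermediate term as $(1-\frac{\alpha}{2})(1-\frac{1}{m})n$ is in fact slightly cleaner than the paper's (which contains a minor algebraic typo at that step), and your closing remark about summing block-by-block is a reasonable bookkeeping point the paper leaves implicit.
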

\begin{proof}
The proof directly follows from using Corollary~\ref{opt-lb} and ~\ref{spt-sptm} along with the fact that $\sum\limits^{opt}_m \geq n$. For details, please refer to Appendix~\ref{final-proof}.
\end{proof}
\section{Concluding remarks} \label{conc}
\par In this paper, we investigated the problem of minimizing total stretch (average stretch) on a single and parallel machines. 
We give tighter bound in comparison to previous work of Tao et al.~\cite{Tao:2013} on weighted sum flow time. 
The main results, obtained by a series of intermediate schedules, show that the well-known scheduling policy SPT achieves competitive ratios of $(\Delta - \frac{1}{\Delta}+1)$ and $\Delta - \frac{1}{\Delta}+ \frac{3}{2} - \frac{1}{2m}$, respectively for the single and parallel machines cases. 
%
%
%
%
\bibliographystyle{unsrt}
\bibliography{sum-stretch}

\begin{thebibliography}{10}

\bibitem{Bender98flowand}
M.Bender, S.Chakrabarti, and S.~Muthukrishnan.
\newblock Flow and stretch metrics for scheduling continuous job streams.
\newblock In {\em Proceedings of the 9th Annual ACM-SIAM Symposium on Discrete
  Algorithms}, pages 270--279, 1998.

\bibitem{Graham:1979}
R.L. Graham, E.L. Lawler, J.K. Lenstra, and A.H.G.~Rinnooy Kan.
\newblock Optimization and approximation in deterministic sequencing and
  scheduling: a survey.
\newblock volume~5 of {\em Annals of Discrete Mathematics}, pages 287 -- 326.
  Elsevier, 1979.

\bibitem{Arnaud2008}
A.~Legrand, A.Su, and F.Vivien.
\newblock Minimizing the stretch when scheduling flows of divisible requests.
\newblock {\em Journal of Scheduling}, 11(5):381--404, 2008.

\bibitem{min-average-stretch}
S.Muthukrishnan, R.Rajaraman, A.Shaheen, and J.E Gehrke.
\newblock Online scheduling to minimize average stretch.
\newblock In {\em 40th Annual Symposium on Foundations of Computer Science,
  1999}, pages 433--443, 1999.

\bibitem{Becchetti:2000}
L.~Becchetti, S.~Leonardi, and S.~Muthukrishnan.
\newblock Scheduling to minimize average stretch without migration.
\newblock In {\em Proceedings of the11th Annual ACM-SIAM Symposium on Discrete
  Algorithms}, pages 548--557, 2000.

\bibitem{Awerbuch99}
B.~Awerbuch, Y.~Azar, S.~Leonardi, and O.~Regev.
\newblock Minimizing the flow time without migration.
\newblock In {\em In Proceedings of the 31st Annual ACM Symposium on Theory of
  Computing}, pages 198--205, 1999.

\bibitem{Chekuri:2001}
C.Chekuri, S.Khanna, and A.Zhu.
\newblock Algorithms for minimizing weighted flow time.
\newblock In {\em Proceedings of the 33rd Annual ACM Symposium on Theory of
  Computing}, pages 84--93, 2001.

\bibitem{Bender03}
M.~A. Bender, S.~Muthukrishnan, and R.~Rajaraman.
\newblock Approximation algorithms for average stretch scheduling.
\newblock {\em J. of Scheduling}, 7:2004, 2003.

\bibitem{Nikhil:2007}
N.~Bansal., C.~Ho-Leung, R.~Khandekar, K.~Pruhs, B.~Schicber, and C.~Stein.
\newblock Non-preemptive min-sum scheduling with resource augmentation.
\newblock In {\em In proceedings of the 48th Annual IEEE Symposium on
  Foundations of Computer Science, 2007}, pages 614--624, 2007.

\bibitem{Leonardi:2007}
S.~Leonardi and D.~Raz.
\newblock Approximating total flow time on parallel machines.
\newblock {\em Journal of Computer and System Sciences}, 73(6):875 -- 891,
  2007.

\bibitem{Kellerer95approximability}
H.Kellerer, T.Tautenhahn, and G.J. Woeginger.
\newblock Approximability and nonapproximability results for minimizing total
  flow time on a single machine.
\newblock In {\em Proceedings of the 28th Annual ACM Symposium on Theory of
  Computing}, pages 418--426, 1995.

\bibitem{Bunde04sptis}
D.P. Bunde.
\newblock {SPT} is optimally competitive for uniprocessor flow.
\newblock {\em Information Processing Letters}, 90:2004, 2004.

\bibitem{Tao:2013}
J.Tao and T.~Liu.
\newblock {WSPT}'s competitive performance for minimizing the total weighted
  flow time: From single to parallel machines.
\newblock {\em Mathematical Problems in Engineering}, 10.1155/2013/343287,
  2013.

\bibitem{Chung:2010}
C.~Chung, T.~Nonner, and A.~Souza.
\newblock {SRPT} is 1.86-competitive for completion time scheduling.
\newblock In {\em Proceedings of the 21st Annual ACM-SIAM Symposium on Discrete
  Algorithms}, pages 1373--1388, 2010.

\bibitem{Chou:2006}
M.C. Chou, M.~Queyranne, and D.~Simchi-Levi.
\newblock The asymptotic performance ratio of an on-line algorithm for uniform
  parallel machine scheduling with release dates.
\newblock {\em Math. Program.}, 106(1):137--157, May 2006.

\bibitem{Goemans02}
M.X. Goemans, M.~Queyranne, A.S. Schulz, M.~Skutella, and Y.~Wang.
\newblock Single machine scheduling with release dates.
\newblock {\em SIAM Journal of Discrete Mathematics}, 15(2):165--192, 2002.

\end{thebibliography}
\newpage
\appendix
\section{Proof of Lemma~\ref{srpt-as-lower-bound}} \label{Proof-srpt-as-lower-bound}
\begin{proof} 
Without the loss of generality, assume that optimal schedule run jobs in numerical order: $J_1$ followed by $J_2$ and so on. 
\begin{itemize}
\item First, we consider the case where the optimal schedule is \emph{compact}. Let $J_k$ and $J_{k+1}$ be two consecutive jobs. Let $t$ be time at which $J_{k}$ starts its execution. Then either 
\begin{itemize}
\item $p_{k+1} > p_{k}$, then \\ 

SRPT will also schedule $J_k$ before $J_{k+1}$ at time $t$. \\

\item $p_{k+1} \leq p_{k}$ and $t < r_{k+1} < t+p_k$, then \\ 

$s_k = \frac{t+p_k-r_k}{p_k}$ and $s_{k+1} = \frac{t+p_k+p_{k+1}-r_{k+1}}{p_{k+1}}$. \\

Let $w = r_{k+1} - t$ . Then, the following inequality holds since the optimal schedule is compact.
\begin{align}
s_k+s_{k+1} &\leq \frac{t+w+p_k+p_{k+1}-r_k}{p_k} + \frac{t+w+p_{k+1}-r_{k+1}}{p_{k+1}} \nonumber \\
w &\geq p_k-p_{k+1} \nonumber
\end{align}
At $r_{k+1}$, the remaining processing time for $J_k$ is $p_k - w \leq p_{k+1}$. Therefore, SRPT completes the execution of $J_k$ before scheduling $J_{k+1}$.
\end{itemize} \ \\
\item On another hand, consider the case where the optimal schedule is \emph{not compact}. 
Let $t$ be the first moment when there is an idle time in the optimal schedule between two consecutive jobs $J_k$ and $J_{k+1}$ even though some job $J_l$ is available. 
If $w$ is length of idle time: $w = S_{k+1} - C_{k}$, then from previous case, it follows that $w<p_l - p_k$. 
Therefore, during this idle time, SRPT runs job $J_l$ such that remaining processing time of $J_l$ at $t+w$ is $p_l - w > p_k$. 
Hence, SRPT preempts job $J_l$ at $t+w$, and schedules $J_k$. 
Therefore at time $t+w$, SRPT and optimal have same set of uncompleted jobs. But the amount of work left in SRPT is less than or equal to the amount of job left in an optimal schedule. 
\end{itemize}
Iteratively applying these above arguments at each idle period, it follows that SRPT is a lower bound for optimal schedule for non-preemptive total stretch.
\end{proof}
\section{Proof of theorem~\ref{final-theorem}} \label{final-proof}
\begin{proof}
Using the bound as stated in the corollary~\ref{opt-lb} and replacing $\sum\limits^{sptm}$ with $\sum\limits^{opt}_m$ in corollary~\ref{spt-sptm}, we get:
\begin{align*}
\sum\limits_{m}^{spt} &\leq (\Delta - \frac{1}{\Delta} + 1) (\sum_{m}^{opt} - \frac{1}{2}(1 - \frac{1}{m})n) + (1-\frac{1}{m})n \\
&\leq  (\Delta - \frac{1}{\Delta} + 1) \sum_{m}^{opt} + \frac{1}{2}(1-\frac{1}{m})n - \frac{\Delta - \frac{1}{\Delta} + 1}{2} (1- \frac{1}{m})n \\
&\leq (\Delta - \frac{1}{\Delta} + 1) \sum_{m}^{opt} + \frac{1}{2}(1-\frac{1}{m})n 
\end{align*}
Now combining the fact that $\forall j \in J, s_j \geq 1$ for all schedules. We get $n \leq \sum\limits^{opt}_m$. Replacing this inequality in above inequality, we obtain:
\begin{align*}
\sum\limits_{m}^{spt} &\leq (\Delta - \frac{1}{\Delta} + 1 + \frac{1}{2}(1 - \frac{1}{m}) \sum_{m}^{opt} \\
&\leq (\Delta - \frac{1}{\Delta} + \frac{3}{2} - \frac{1}{2m}) \sum_{m}^{opt}
\end{align*}
\end{proof}
\end{document}